\newcommand{\scr}[1]{\mathcal{#1}}
\def\idty{{\leavevmode{\rm 1\ifmmode\mkern -4.8mu\else\kern -.3em\fi
      I}}}
\renewcommand{\Bbb}[1]{\if1#1\idty\else\mathbb{#1}\fi}
\newcommand{\kb}[1]{|#1\rangle\langle#1|}
\newcommand{\KB}[2]{|#1\rangle\langle#2|}
\newcommand{\tr}{\operatorname{tr}}
\newcommand{\supp}{\operatorname{supp}}
\newcommand{\Lz}{\operatorname{L}^2}
\newcommand{\SP}{\operatorname{span}}
\newcommand{\diag}{\operatorname{diag}}
\newcommand{\Ran}{\operatorname{Ran}}
\newcommand{\Pf}{\operatorname{Pf}}
\newcommand{\sign}{\operatorname{sign}}
\newtheorem{thm}{Theorem}[section]
\newtheorem{prop}[thm]{Proposition}
\newtheorem{lem}[thm]{Lemma}
\newtheorem{kor}[thm]{Corollary}
\newenvironment{proof}{\par\noindent\textit{Proof.\ }}{\hfill $\Box$ \vspace{1em}}
\newtheorem{aX}{Axiom} %\renewcommand{\theaX}{E\Roman{aX}}
\newcommand{\J}{J}
\title{Entanglement distillation from quasifree Fermions.}
\author{Zolt\'an K\'ad\'ar\footnote{kadar@isi.it}, Michael Keyl\footnote{keyl@isi.it}\, and Dirk
  Schlingemann\footnote{dirk@isi.it}\\ 
{\small ISI Foundation, Viale S. Severo 65 - 10133 Torino - Italy}}
\begin{document}
\maketitle

\begin{abstract}
  We develop a scheme to distill entanglement from bipartite Fermionic
  systems in an arbitrary quasifree state. It can be applied if either one
  system containing infinite one-copy entanglement is available or if an
  arbitrary amount of equally prepared systems can be used. We show that the
  efficiency of the proposed scheme is in general very good and in some cases
  even optimal. Furthermore we apply it to Fermions hopping on an infinite
  lattice and demonstrate in this context that an efficient numerical analysis
  is possible for more than $10^6$ lattice sites.
\end{abstract}

%%%%%%%%%%%%%%%%%%%%%%%%%%%%%%%%%%%%%%%%%%%%%%%%%%%%%%%%%%%%%%%%%%%%%%%%%%%%%%%%
\section{Introduction}
\label{sec:introduction}
%%%%%%%%%%%%%%%%%%%%%%%%%%%%%%%%%%%%%%%%%%%%%%%%%%%%%%%%%%%%%%%%%%%%%%%%%%%%%%%%
An important class of Fermionic states are the Gaussian ones. They describe
equilibrium states of quasifree spin chains, such as non-interacting electrons
\cite{LSM}. 
Since they are completely characterized by their two-point correlations,
they are under good control also in case of large particle number. For an example
of their standard description based on a fixed basis in the Fermionic Fock space, see
\cite{b}.

Defining entanglement is not straightforward in Fermionic systems as Fermions
are indistinguishable and described by antisymmetric tensor products. There
are several ways around this problem. E.g. in \cite{sckll} antisymmetrized
states with definite particle number are studied while in \cite{AFOV} the focus
is on the lattice spin system admitting Fermionic description. The
conceptionally clearest approach is, however, to base the description of
subsystems not on tensor products of Hilbert spaces, but on the notion of
local observables \cite{HAAG92}. This approach is successfully applied
 to the study of entanglement of systems with infinite degrees of freedom
\cite{MR2153773,KMSW06} and for the analysis of separable \cite{Mo06}
 and maximally entangled \cite{MEF08} states of Fermionic systems. 
Furthermore, the conservation of local parity superselection rule allows for
different possible definitions for separability \cite{BaCiWo07} (consult that
article also for more literature on Fermionic entanglement). 

In this paper we will take an operational point of view. Instead of asking how
much entanglement is contained in a given bipartite, Fermionic system, we will
look for explicit distillation protocols, i.e. procedures to generate (in
terms of LOCC), from a number of Fermions, pairs of \emph{distinguishable}
particles in a maximally entangled state. The asymptotic rate of such a
protocol can then be regarded as a measure for the entanglement contained in
the original state. The advantage of this approach is that the only concept
which needs a (slight) generalization is that of local operations (of
bipartite Fermionic systems and from a Fermionic to an ordinary bipartite
system). The latter, however, can be easily based on the idea of local
observables mentioned in the last paragraph; cf. \cite{K09} for a more
complete discussion.

Following this idea, we present a family of  protocols which is
particularly adopted towards distillation from quasifree (i.e. Gaussian)
states. The general structure involves a two step procedure: First we
generate (with a certain success probability $p$) bipartite $d$-level systems 
(where the local dimension can be chosen within a certain range) in an
isotropic state $\sigma$. In particular for very large systems (cf. the
discussion of free Fermions on an infinite lattice in Section
\ref{sec:free-ferm-hopp-1}) the entanglement fidelity of $\sigma$ can be
already very close to one such that no more steps are necessary. If this is
not the case but $\sigma$ is distillable we can continue with standard
techniques like the recurrence method or hashing (cf. \cite{VoWo03} and the
references therein).

Each protocol in this family admits a very easy parametrization in terms
of two operators (a projection $D$ and a partial isometry $V$; cf. Section
\ref{sec:family-dist-prot}) on the one-particle reference space $\mathcal{K}$
rather than the corresponding Fermionic Fockspace $\mathcal{H}$. This implies
in particular that we can express the success probability $p$ and the fidelity
$f$ of the isotropic output states $\sigma$ (from which the overall rate of
the protocol can be calculated with known formulas, once we have decided how to
process the resulting $d$-level systems) in terms of $D$ and $V$. The
dimension of $\mathcal{K}$ (i.e. the space on which $D$ and $V$ operate)
grows, however, only linearly in the system size (i.e. the number of
independent modes) while $\dim \mathcal{H}$ grows exponentially. Hence we get
a very efficient way to discuss the entanglement content of quasifree
Fermionic systems which is applicable to, literally, millions of
particles. This is explicitly demonstrated in Section
\ref{sec:free-ferm-hopp-1}. 

The given class of protocols is still large and for a given quasifree state
$\rho_S$ (which is completely described by its two-point correlation matrix
$S$) most of them lead to very poor results. Therefore, in Section 
\ref{sec:optimal-protocol} we present 
 a scheme to derive the operators $D$ and $V$ from
$S$. It is based on the observation that to any generic, quasifree state
$\rho_S$  we can associate a quasifree, maximally entangled state in a natural
way (this resembles a little bit the Schmidt decomposition, but note that
pureness of $\rho_S$ is not required). For a large subset of quasifree states
(whose dimension grows quadratically in the system size) this particular
protocol optimizes the product of success probability $p$ and fidelity $f$ (we
will discuss in Section \ref{sec:optimal-protocol} why this is a good figure
of merit and, actually, related to the overall distillation rate). 

The presented scheme is discussed in terms of two classes of examples: Small
systems consisting of two or four modes and free Fermions, hopping on a
one-dimensional lattice. The numerical efficiency mentioned above allows us in
the latter case to treat large particle numbers and this leads to new insights
about the entanglement properties of this particular system, which are
interesting in their own right.

The paper is organized as follows: In Section \ref{sec:math-prel} we will
start with some mathematical preliminaries about Fermionic systems. The
approach we present here is based on the ``selfdual formalism''
\cite{Araki87,MR0295702} and might be unfamiliar for most readers (although it is
not particularly new). For the purposes of this paper it is, however, the
most natural formalism. In Section \ref{sec:bipart-syst-entangl} we apply it
to discuss some aspects of bipartite Fermionic systems and in Section
\ref{sec:family-dist-prot} we present the family of protocols mentioned
above. The question of optimality is then treated in Section
\ref{sec:optimal-protocol}. The rest of the paper is finally devoted to
examples: two and four modes in Section \ref{sec:two-four-modes} and 1d
lattice systems in Sections \ref{sec:dist-from-latt-1} and
\ref{sec:free-ferm-hopp-1}. Several technical aspects (including, in particular,
proofs) are postponed to the appendix.

%%%%%%%%%%%%%%%%%%%%%%%%%%%%%%%%%%%%%%%%%%%%%%%%%%%%%%%%%%%%%%%%%%%%%%%%%%%%%%%%
\section{Mathematical preliminaries}
\label{sec:math-prel}
%%%%%%%%%%%%%%%%%%%%%%%%%%%%%%%%%%%%%%%%%%%%%%%%%%%%%%%%%%%%%%%%%%%%%%%%%%%%%%%%

To introduce some terminology and notations, let us start with some technical
remarks about Fermionic systems. All the material we are going to present here
can be found in the literature, \cite{Araki87,MR0295702} are some standard
references.  

A Fermionic system consisting of $n$ modes can be described by smeared
Majorana fields
\begin{equation}
  \mathcal{K}^{(n)} \ni x \mapsto B(x) \in \mathcal{B}(\mathcal{H}^{(n)}).
\end{equation}
The system Hilbert space $\mathcal{H}^{(n)}$ is $2^n$ dimensional and can be
realized as the Fermionic Fockspace over $\Bbb{C}^n$. Therefore we will refer
to it occasionally as the \emph{Fermionic Fock space of $n$ modes} although 
the precise form of $\mathcal{H}^{(n)}$ is (apart from its dimension) not
really important. The Hilbert space $\mathcal{K}^{(n)}$ -- in the following
called \emph{reference space} -- is $2n$ dimensional and equipped with an
antilinear involution (called \emph{complex conjugation}) $J:
\mathcal{K}^{(n)} \rightarrow \mathcal{K}^{(n)}$. In other words $J$ satisfies
\begin{equation}
  J(x + \lambda y) = J(x) + \bar{\lambda} J(y),\quad J^2 = \Bbb{1}
\end{equation}
for all $x,y \in \mathcal{K}^{(n)}$ and all $\lambda \in \Bbb{C}$. The typical
choice for $\mathcal{K}^{(n)}$ and $J$ is $\Bbb{C}^{2n}$ equipped with the
ordinary conjugation in the canonical basis. As with $\mathcal{H}^{(n)}$,
however, the explicit realization of $(\mathcal{K}^{(n)},J)$ is not
important. $\mathcal{K}^{(n)}$ contains a distinguished \emph{real} subspace
given by
\begin{equation}
  \mathcal{K}^{(n)}_{\Bbb{R}} = \{ x \in \mathcal{K}^{(n)} \, | \, Jx = x \}
  \subset \mathcal{K}^{(n)}.
\end{equation}
Lots of structures we will encounter in the following are actually associated
to $\mathcal{K}_{\Bbb{R}}^{(n)}$ rather than to $\mathcal{K}^{(n)}$. We will
call in particular an orthonormal basis $e_j$, $j=1,\dots,2n$ of
$\mathcal{K}_{\Bbb{R}}^{(n)}$ (which is of course an orthonormal basis of
$\mathcal{K}^{(n)}$, too) a \emph{real basis.}

The operators $B(x)$ are (complex) linear in $x \in \mathcal{K}^{(n)}$ satisfy
the \emph{canonical anticommutation relations} (CAR) in the form
\begin{equation} \label{eq:76}
  \{B(x),B(y)\} = \langle Jx, y \rangle \Bbb{1},\quad B(x)^* = B(Jx)
\end{equation}
and they act \emph{irreducibly} on $\mathcal{H}^{(n)}$, i.e.
\begin{equation} \label{eq:77}
  [A,B(x)] =  0 \ \forall x \in \mathcal{K}^{(n)} \Rightarrow A = \lambda \Bbb{1}.
\end{equation}
These conditions fix the $B(x)$ up to unitary equivalence, which is the reason
why we are not interested in their explicit form. They can be constructed
easily in terms of ordinary creation and annihilation operators. The details
are shown in Appendix \ref{sec:self-dual-formalism}.

Let us consider now states of the system. To any density operator $\rho$ on
$\mathcal{H}^{(n)}$ we can associate a \emph{covariance operator} $S \in
\mathcal{B}(\mathcal{K}^{(n)})$ by  
\begin{equation} 
  \tr(\rho B(x) B(y) ) = \langle J x, S y\rangle\quad \forall x,y \in
  \mathcal{K}. 
\end{equation}
$S$ is selfadjoint and satisfies
\begin{equation} \label{eq:45}
  0 \leq S \leq \Bbb{1}\ \text{and}\ JSJ = \Bbb{1} - S.
\end{equation}
A state $\rho_S$ is called \emph{quasifree} if it is uniquely characterized
by $S$ and the conditions 
\begin{gather} \label{eq:38}
  \tr(\rho_S B(x_1) \cdots B(x_{2l+1})) = 0\\
  \tr(\rho_S B(x_1) \cdots B(x_{2l})) = \sum  \sign( p ) \prod_{j=1}^{l} \langle
  \J x_{p(2j-1)} , S x_{p(2j)} \rangle, 
\end{gather}
which have to hold for all $l \in \Bbb{N}$ and $x_k \in \scr{K}^{(n)}$. The sum in
(\ref{eq:38}) is taken over all permutations $p$ satisfying  
\begin{equation}
  p(1) < p(3) < ... < p(2l-1) , \quad p(2j-1) < p(2j) 
\end{equation}
and $\sign (p)$ is the signature of $p$. 

A pure state $\psi_E \in \mathcal{H}^{(n)}$ is quasifree (and then called a
\emph{Fock state}) iff its covariance operator $E$ is a projection. According
to Equation (\ref{eq:45}) it has to satisfy $JEJ = \Bbb{1} - E$. Each
projection with this property is called a \emph{basis projection}. 

Consider now a unitary operator $R$ on $\mathcal{K}$ satisfying $[J,R] = 0$. It
leaves the real subspace $\mathcal{K}_{\Bbb{R}}$ invariant; i.e. it is a
\emph{real orthogonal} transformation. $R$ gives rise to a new set of
operators by $B_R(x) = B(Rx)$. It is easy to see that $B_R$ is complex linear
and satisfies (\ref{eq:76}) and (\ref{eq:77}). Hence the fields $B(x)$ and
$B_R(x)$ are unitarily equivalent and describe effectively the same physical
system. More precisely there is a unitary $\Gamma(R)$ on $\mathcal{H}^{(n)}$,
called the \emph{Bogolubov transformation} of $R$ satisfying 
\begin{equation} \label{eq:46}
  \alpha_R(B(x)) = \Gamma(R) B(x) \Gamma(R)^* = B(Rx).
\end{equation}
The \emph{Bogolubov automorphism} $\alpha_R$ of
$\mathcal{B}(\mathcal{H}^{(n)})$ is uniquely determined by this condition, the
Bogolubov transformation is only fixed up to a phase.

The most important special case arises with $R=-\Bbb{1}$. The corresponding
automorphism $\Theta = \alpha_{-\Bbb{1}}$ is called the \emph{parity
  automorphism}. According to (\ref{eq:46}) it is characterized by
\begin{equation} \label{eq:47}
  \Theta(B(x)) = \theta B(x) \theta  = - B(x).
\end{equation}
The associated Bogolubov transformation $\Gamma(-\Bbb{1}) = \theta$
can be chosen selfadjoint and is then called \emph{parity operator}. It is
fixed by (\ref{eq:47}) up to a sign and given in terms of a real basis $e_a$, 
$a=1,\dots,2n$ of $\mathcal{K}^{(n)}$ by (cf. Appendix
\ref{sec:parity-operator})
\begin{equation} \label{eq:50}
  \theta = 2^n i^n B(e_1) \cdots B(e_{2n}).
\end{equation}
If we change the basis in terms of an orientation preserving, real, orthogonal
transformation, $\theta$ remains invariant. If we change the orientation, the
operator $\theta$ changes the sign. In the following we will assume that a
particular orientation is chosen.

The parity automorphism gives rise to the distinction of even and odd
operators: $A \in \mathcal{B}(\mathcal{H}^{(n)})$ is called even if $\Theta(A)
= A$ and odd if $\Theta(A) = -A$. Only selfadjoint, even operators can be
regarded as observables. Similarly, a completely positive (cp) map
$T:\mathcal{B}(\mathcal{H}^{(n)}) \rightarrow \mathcal{B}(\mathcal{H}^{(n)})$ is
an operation only if it commutes with $\Theta$, and  only its action on even
elements is relevant. In other words, if a second cp map $T_1$ satisfies $T(A) =
T_1(A)$ for all $A$ with $\Theta(A)=A$ it describes the same operation.

Using the \emph{spectral decomposition} $\theta = P_+ - P_-$ of the parity
operator we can decompose the Hilbert space $\mathcal{H}^{(n)}$ into an even
and an odd part:
\begin{equation} \label{eq:49}
  \mathcal{H} = \mathcal{H}^{(n)}_+ \oplus \mathcal{H}_-^{(n)},\quad
  \mathcal{H}^{(n)}_\pm = P_\pm \mathcal{H}^{(n)},\quad \theta = P_+ - P_-.
\end{equation}
An operator $A = \mathcal{B}(\mathcal{H}^{(n)})$ is even iff it is of the form
$A = A_+ \oplus A_-$ with $A_\pm \in \mathcal{B}(\mathcal{H}^{(n)}_\pm)$.

A \emph{Fermionic subsystem} of our given system consists of a number of
modes which are distinguished by a certain physical property like position in
space. Mathematically it is described by a projection $D$ satisfying 
\begin{equation}
  D \in \mathcal{B}(\mathcal{K}),\quad D^2=D,\quad D^*=D\quad [D,J] = 0.
\end{equation}
The last condition implies that $D$ can be restricted to the real subspace
$\mathcal{K}^{(n)}_\Bbb{R}$ of $\mathcal{K}^{(n)}$. Therefore we will call it a
\emph{real projection}. $D$ projects onto the subspace $D\mathcal{K}$ of
$\mathcal{K}^{(n)}$ containing the modes belonging to the subsystem. The
corresponding Majorana operators does not act any longer
irreducibly on $\mathcal{H}^{(n)}$. Instead, we can find a unitary 
\begin{equation}
  U_D: \mathcal{H}^{(n)} \rightarrow \mathcal{H}^{(l)} \otimes
  \mathcal{H}^{(n-l)},\quad l = \tr(D) < n
\end{equation}
which satisfies and is (up to a phase) uniquely characterized by
\begin{equation} \label{eq:48}
  U_D B(x) U_D^* =
  \begin{cases}
    B(x) \otimes \Bbb{1} &  x \in D \mathcal{K}^{(n)} \cong \mathcal{K}^{(l)} \\
    \theta \otimes B(x) & x \in (\Bbb{1} - D) \mathcal{K}^{(n)} \cong \mathcal{K}^{(n-l)}.
  \end{cases}
\end{equation}
Here $\mathcal{H}^{(l)}$ and $\mathcal{H}^{(n-l)}$ denotes the Fockspace of $l
= \tr(D)$ and $n-l = \tr(\Bbb{1} - D)$ Fermionic modes. Similarly
$\mathcal{K}^{(l)}$ and $\mathcal{K}^{(n-l)}$ are the corresponding reference
spaces. 

Dropping the complementary subsystem 
is described by the operation (in the Heisenberg picture)
\begin{equation} \label{eq:51}
  \mathcal{B}(\mathcal{H}^{(l)}) \ni A \mapsto \Delta_D(A) = U_D^* A \otimes
  \Bbb{1} U_D \in \mathcal{B}(\mathcal{H}^{(n)}). 
\end{equation}
It is easy to see that even operators on $\mathcal
{H}^{(l)}$ are mapped to
even operators on $\mathcal{H}^{(n)}$. Hence the map describes an operation in
the sense discussed above. 

\section{Bipartite systems and entanglement}
\label{sec:bipart-syst-entangl}
%%%%%%%%%%%%%%%%%%%%%%%%%%%%%%%%%%%%%%%%%%%%%%%%%%%%%%%%%%%%%%%%%%%%%%%%%%%%%%%%

Let us consider now a bipartite systems consisting of $2m$ modes shared by
Alice and Bob. Hence $n = 2m$ and there is a distinguished $n$-dimensional, real  
projection $I_A \in \mathcal{B}(\mathcal{K}^{(2m)})$  which defines the Alice
subsystems. Likewise $I_B = \Bbb{1} - A$ defines the Bob subsystem, and the
Hilbert space $\mathcal{K}^{(2m)}$ decomposes into a direct sum
\begin{equation}
  \mathcal{K}^{(2m)} = \mathcal{K}^{(m)}_A \oplus \mathcal{K}^{(m)}_B,\ J =
  J_A \oplus J_B,\ \mathcal{K}_{A/B}^{(m)} = I_{A/B} \mathcal{K}^{(2m)},\ J_{A/B} = I_{A/B} J
  I_{A/B}. 
\end{equation}
According to Equation (\ref{eq:48}) we can identify\footnote{We will do this
  in the following without further reference, as long as confusion can be
  avoided.} the Hilbert space $\mathcal{H}^{(2m)}$ via the unitary $U_A =
U_{I_A}$ with $\mathcal{H}^{(m)} \otimes \mathcal{H}^{(m)}$ such that ordinary
entanglement theory applies. There are, however, two important caveats. 

According to (\ref{eq:48}) a Majorana operator $B(x)$ belonging to Bob's
subsystem, i.e. $x \in I_B \mathcal{K}^{(2m)}$ is of the form $U_A B(x) U_A^* =
\theta \otimes B(x)$. Hence Alice's Hilbert space is \emph{not}
invariant under the action of Bob's operators\footnote{Semingly, a similar
  problem does not arise with Alice's operators. This is, however, only an
  artifact of the special representation given by the unitary $U_A$. We could
  equally well choose $U_B = U_{I_B}$ to exchange the roles of Alice and
  Bob. The crucial point is that Majorana operators  $B(x_{A/B})$ with
  $x_{A/B} \in \mathcal{K}^{(m)}_{A/B}$ do not act independently.}. This
problem can be rectified if we take the remark seriously that only \emph{even}
operators are physically relevant. Since an operator $Q \in
\mathcal{B}(\mathcal{H}^{(2m)})$ is even iff it can be written as an even
polynomial in the $B(x)$, it follows immediately that an even operator
belonging to the Alice (Bob) subsystem operates trivially on the Bob (Alice)
Hilbert space. Or in other words the \emph{observable algebras} 
\begin{gather} \label{eq:72}
  \mathcal{A} = \SP \{B(x_1) \cdots B(x_{2l}) \, | \, x_1, ..., x_{2l} \in I_A
  \mathcal{K},\ l \in \Bbb{N} \} \\
  \mathcal{B} = \SP \{B(x_1) \cdots B(x_{2l}) \, | \, x_1, ..., x_{2l} \in I_B
  \mathcal{K},\ l \in \Bbb{N} \} \label{eq:74}
\end{gather}
associated to Alice and Bob respectively are of the form
\begin{gather}
  U_A \mathcal{A} U_A^* = \left(\mathcal{B}(\mathcal{H}^{(m)}_+) \oplus
  \mathcal{B}(\mathcal{H}^{(m)}_-)\right) \otimes \Bbb{1} \subset
  \mathcal{B}(\mathcal{H}^{(m)}) \otimes \Bbb{1} \\
  U_A \mathcal{B} U_A^* = \Bbb{1} \otimes
  \left(\mathcal{B}(\mathcal{H}^{(m)}_+) \oplus
    \mathcal{B}(\mathcal{H}^{(m)}_-)\right) \subset
  \Bbb{1} \otimes \mathcal{B}(\mathcal{H}^{(m)}).
\end{gather}
But now a second problem arises, since both parties admit observables -- the
local parities $\theta \otimes \Bbb{1}$ and $\Bbb{1} \otimes
\theta$ -- which can be measured \emph{without disturbing the system.}
Therefore we have to deal with entanglement theory at the presence of local
superselection rules. A  detailed discussion of this subject can be found in
\cite{SVC}. For us only a few special topics are relevant.

Let us start with a quasifree state $\rho_S$ with covariance operator $S$. The
subsystem projections $I_A$ and $I_B$ give rise to the operators $S_{jk} = I_j
S I_k$, $j,k=A,B$. The properties of $S$ imply that
\begin{equation} \label{eq:55}
  X_S = -i(2S_{AA} - \Bbb{1}),\quad Z_S = -i (2S_{BB} - \Bbb{1}),\quad Y_S = -2i S_{AB}
  = 2i S_{BA}^* 
\end{equation}
are real operators and that $X_S$ and $Z_S$ are antisymmetric. 

Many entanglement properties of $\rho_S$ can be stated in terms of $X_S, Y_S$ and
$Z_S$. Important for us are maximally entangled, quasifree states which are
according to \cite{MEF08} characterized by the condition
\begin{equation}
  X_S = Z_S = 0,\quad Y_SY_S^* = I_A,\quad Y_S^*Y_S = I_B.
\end{equation}
In other words $Y_S$ is a partial isometry with $I_A$ as source and $I_B$ as its
target projection. If we introduce a real basis $e_a$, $a=1,\dots,4m$ of
$\mathcal{K}^{(2m)}$ which is adopted to the Alice/Bob split, i.e.
\begin{equation}
  e_1, \dots, e_{2m} \in I_A \mathcal{K}^{(2m)},\quad e_{2m+1}, \dots, e_{4m} \in I_B
  \mathcal{K}^{(2m)} 
\end{equation}
we get the matrix
\begin{equation}
Y_{jk} = \langle e_j, Y_S e_{2m + k} \rangle, \quad j,k = 1, \dots, 2m\label{bpp}
\end{equation}
which is orthogonal. This gives us a parametrization of the set of all
quasifree, maximally entangled states in terms of the group $\mathrm{O}(2m)$,
and it implies that there are two possible orientations. This is connected to
the \emph{global} sign of $\theta$ which can be fixed by the condition
$\tr(\theta \rho_S)  > 0$ (provided this expectation value is not zero; note
in this context that $\tr(\rho_S \theta) = \pm 1$ only, occurs iff $\rho_S$
is pure, i.e. a Fock state).  

According to the definition each quasifree state is an \emph{even} state,
i.e. $\tr(\rho_S A) = 0$ for each odd operator $A$ on
$\mathcal{H}^{(2m)}$. This implies $P_+\rho_SP_- =P_-\rho_SP_+ = 0$, and $P_+
\psi_E = 0$ or $P_- \psi_E = 0$ if $\rho=\kb{\psi_E}$ is pure. Note that the
roles of $P^+$ and $P^-$ can be exchanged by switching the orientation (and
therefore the sign of $\theta$; cf. the discussion of Equation (\ref{eq:50})).
If we fix the sign of $\theta$, as stated in the last paragraph, by the
condition that $\langle \psi_E, \theta \psi_E \rangle = 1$ holds we get $P^-
\psi_E = 0$, which should hold in the following.

Now consider the decomposition $\theta^{(2m)} = \theta^{(m)} \otimes
\theta^{(m)}$ of the global parity into a product of local ones. It implies a
likewise decomposition of the projections $P_\pm$ as 
\begin{equation} \label{eq:52}
P_+ = P_{++} + P_{--},\quad P_-=P_{+-} + P_{-+},\quad P_{jk} = P_j \otimes
P_k,\ j,k=\pm
\end{equation}
and therefore $P_- \psi_E = 0$ implies
\begin{equation} \label{eq:54}
\psi_E = \frac{1}{\sqrt{2}} \left( \psi_{E,+} + \psi_{E,-} \right),\quad
\psi_{E,k} \in \mathcal{H}_{k}^{(m)} \otimes \mathcal{H}_k^{(m)},\ k=\pm
\end{equation}
and $\psi_E$ is maximally entangled if $\psi_{E,\pm}$ are maximally entangled
in the usual sense\footnote{Note, however, that not all maximally entangled
states can arise here, since $\psi_E$ is quasifree by assumption.}. 

An important point is now that a relative phase between $\psi_{E,+}$ and
$\psi_{E,-}$ can not be determined by \emph{local} measurements by Alice and
Bob. In other words $\psi_E$ and
\begin{equation}
\tilde{\psi} = \frac{1}{\sqrt{2}} \left( \psi_{E,+} + e^{-i\alpha}
\psi_{E,-} \right) 
\end{equation}
are completely equivalent, as long as only LOCC operations and measurement
are allowed. The only quasifree choice in this family arises for $\alpha =
\pi$. The corresponding basis projection $\tilde{E}$ differs from $E$ only
by the sign of the off-diagonal block $\tilde{E}_{AB}$, i.e.
\begin{equation}
\tilde{S}_{AA} = S_{AA},\quad \tilde{S}_{BB} = S_{BB},\quad \tilde{S}_{AB} =
- S_{AB}. 
\end{equation}
If $\psi_E$ is maximally entangled $\psi_{\tilde{E}}$ is maximally entangled
as well and it represents the same orientation, i.e. $\langle
\psi_{\tilde{E}}, \theta \psi_{\tilde{E}}\rangle = \langle\psi_E, \theta
\psi_E\rangle$. 

Finally, let us have a short look at LOCC. As with
entanglement, the usual concepts apply to Fermions if we use the tensor
product decomposition of $\mathcal{H}^{(2m)}$ given by $U_A$ and take into
account that only operators in the tensor product $\mathcal{A} \otimes
\mathcal{B}$ are physically relevant. We are not giving a full discussion
here, but refer the reader to \cite{K09}. Instead we will only have a short
look on those operations which are relevant for our purposes.

Our first example is ``dropping a subsystem'' as described in Equation
(\ref{eq:51}). If the subsystem projection $D$ commutes with $I_A$ we can
decompose $D$ as $D = D_A \oplus D_B$ with $D_A = I_AD$ and $D_B = I_BD$. The
overall operation $\Delta_D$ can therefore be written as $\Delta_{D_A} \otimes
\Delta_{D_B}$. The final system consists of $\tr(D_A) + \tr(D_B)$ Fermionic
modes. 

Now consider the projection operators $P_{jk}$, $j,k=\pm$ introduced in
Equation (\ref{eq:52}). They define a von Neumann-Lüders instrument which we
will call in the following a \emph{joint parity measurement}. The possible
values are $++,+-,-+,--$, and if the system is before the measurement in 
the state $\rho$ we get
\begin{equation}
p^{jk} = \tr(P^{jk} \rho)\ \text{and}\ \rho^{jk} = \frac{P^{jk} \rho P^{jk}}{p^{jk}}
\end{equation}
for the probability $p^{jk}$ to get the outcome $jk$ and for the corresponding
posteriori state $\rho^{jk}$. Each of the subchannels $\rho \mapsto
P_{jk}\rho P_{jk}$ is obviously a local operation. They produce bipartite
systems, which are described by the Hilbert spaces
\begin{equation}
\mathcal{H}_j^{(m)} \otimes \mathcal{H}_k^{(m)} \cong \Bbb{C}^d\otimes \Bbb{C}^d,\quad
d = 2^{m-1}.
\end{equation}
Hence we get a pair of (distinguishable) $d-$level systems. For later
reference let us introduce as  well the probability 
\begin{equation} \label{eq:53}
p = p_{++} + p_{--} = \tr\bigl(\rho (P_{++} + P_{--})\bigr) = \tr(\rho P_+)
\end{equation}
to get the same parity on both sides. Since $P_{++} + P_{--} = P_+$ this
definition depends on the choice of an orientation (i.e. the sign of
$\theta$). 

An LOCC operation which produces again a Fermionic system (but in contrast to
$\Delta_D$ of the same type) is \emph{twirling}. We choose a quasifree,
maximally entangled state $\rho_E = \kb{\psi_E}$ and average over the group
$G$ of all local unitaries $U=U_A\otimes U_B$ on $\mathcal{H}^{(2m)} =
\mathcal{H}^{(m)} \otimes \mathcal{H}^{(m)}$ such that $U_A \mathcal{A} U_A^*
= \mathcal{A}$, $U_B \mathcal{B} U_B^* = \mathcal{B}$, and $U \psi_{E} =
\psi_{E}$ holds. This leads to a state  
\begin{align}
T_E \rho &= \int_G U \rho U^* dU \\
&= \lambda_+ |\psi_E\rangle\langle\psi_E| + \lambda_-
|\psi_{\tilde{E}}\rangle\langle\psi_{\tilde{E}} | +   \mu_+ (P^{++} + P^{--}) + \mu_-(P^{+-} +
P^{-+}) \label{eq:7}
\end{align}
with
\begin{gather} \label{eq:14} 
\frac{p}{2} = \frac{\lambda_+ + \lambda_-}{2} + \mu_+ d^2, \quad
\frac{1-p}{2} = \mu_- d^2, \\
\langle \psi_E, \sigma \psi_E \rangle = 
\lambda_+ +  \mu_+, \quad \langle \psi_{\tilde{E}}, \sigma \psi_{\tilde{E}} \rangle = 
\lambda_- +  \mu_+.  \label{eq:15}
\end{gather}
where $p$ is the probability from (\ref{eq:53}) and $d$ is given as above by:
$2^{m-1}$. 

%%%%%%%%%%%%%%%%%%%%%%%%%%%%%%%%%%%%%%%%%%%%%%%%%%%%%%%%%%%%%%%%%%%%%%%%%%%%%%%%
\section{A family of distillation protocols}
\label{sec:family-dist-prot}
%%%%%%%%%%%%%%%%%%%%%%%%%%%%%%%%%%%%%%%%%%%%%%%%%%%%%%%%%%%%%%%%%%%%%%%%%%%%%%%%

Let us consider now a large number of bipartite Fermionic
systems, each consisting of $2L$ modes\footnote{Basically, we could also use
an infinite number of modes at this point, but some of the statements made
in the last two sections are not valid in this case.}, which are
\emph{macroscopically distinguishable} (e.g. metallic wires, each  containing
an electron gas), and each prepared in the quasifree\footnote{Basically the
protocols we are going to describe work for non-quasifree states too, but
they are probably not very good. Furthermore, all non-trivial statements we will make
in this paper are related to quasifree states.} state $\rho_S$ (e.g. a
thermal equilibrium state or a ground state of the electron gas). To distill
from these systems maximally entangled pairs of $d$-level systems for some $d
\in \Bbb{N}$ we can proceed as follows (cf. \cite{K09}):

\begin{itemize}
\item \textbf{Step 1.} Choose an \emph{integer} $1 < m \leq L$ and a
\emph{real projection} $D \in \mathcal{B}(\mathcal{K}^{(L)})$ which commutes with
the projection $I_A$ onto Alice's modes, and satisfies $\tr(D_A) = \tr(D_B) =
m$. It defines a bipartite subsystem of our original system and we can drop
the remaining modes by applying the channel $\Delta_D$ defined in Equation 
(\ref{eq:51}). We get a Fermionic system consisting of $2m$ modes in the
quasifree state $\rho_{DSD}$ with covariance matrix $DSD \in
\mathcal{B}(\mathcal{K}^{(m)}) \cong \mathcal{B}(D\mathcal{K}^{(L)})$. The
purpose of this step is to discard low or unentangled parts of our original
system. The correct choice of $D$ requires knowledge about the state $\rho_S$.  
\item \textbf{Step 2.} Now choose a \emph{partial isometry} $V \in
  \mathcal{B}(\mathcal{K}^{(2m)})$, satisfying $VV^* = D_A$ and $V^*V =
  D_B$. It defines a maximally entangled, quasifree state $\psi_E \in
  \mathcal{H}^{(2m)}$ of the reduced system. It has to satisfy
  \begin{equation} \label{eq:3}
    \langle \psi_E, \rho_{DSD} \psi_E \rangle + \langle \psi_{\tilde{E}}, \rho
    \psi_{\tilde{E}}\rangle > \frac{p}{2^{m-1}},
  \end{equation}
  otherwise the protocol would fail. Now apply the corresponding twirling
  operation (\ref{eq:14}) to get the state $T_E(\rho_{DSD})$ from Equation
  (\ref{eq:7}). 
\item \textbf{Step 3.}  Make a joint parity measurement. This leads to a pair of
\emph{distinguishable systems} described by the tensor product of two
$d$-dimensional Hilbert spaces.  

  If the outcome is $++$ or $--$ (which  happens with probability $p$ given in
  Equation (\ref{eq:53})) this leads to the isotropic state $\sigma^{++}$ or
  $\sigma^{--}$ given by 
  \begin{equation} \label{eq:56}
    \sigma^{\pm\pm} = \frac{\lambda_+ + \lambda_-}{2 p^{\pm\pm}}
    |\psi_{E,\pm}\rangle\langle \psi_{E,\pm}| + \frac{\mu_+}{p^{\pm\pm}}
    P^{\pm\pm},
  \end{equation}
  where the maximally entangled states $\psi_{E,k} \in \mathcal{H}^{(m)}_k \otimes
  \mathcal{H}^{(m)}_k$ are defined by the decomposition (cf. Equation
  (\ref{eq:54})) 
  \begin{equation}
    \psi_E = \frac{1}{\sqrt{2}} \left( \psi_{E,+} + \psi_{E,-}\right)
  \end{equation}
  The fidelity $f = \langle \psi_{E,\pm}, \sigma^{\pm\pm} \psi_{E,\pm}\rangle$
  of $\sigma^{\pm\pm}$ is given by 
  \begin{equation} 
    f = \frac{\langle \psi_E, \rho_S \psi_E
      \rangle + \langle \psi_{\tilde{E}}, \rho_S
      \psi_{\tilde{E}}\rangle}{p}, \label{fidd} 
  \end{equation}
  such that Equation (\ref{eq:3}) becomes $f > \frac{1}{d}$. Hence
  $\sigma^{\pm\pm}$ is distillable. 

  Technically, $\sigma^{++}$ and $\sigma^{--}$
  are defined on different Hilbert spaces. Both, however, differ only by the
  value of the local parity operators $\theta \otimes \Bbb{1}$ and $\Bbb{1}
  \otimes \theta$. If we ignore the value of the parities (and only look
  whether they coincide or not) we can identify $\mathcal{H}^{(m)}_+ \otimes
  \mathcal{H}^{(m)}_+$ and $\mathcal{H}^{(m)}_- \otimes \mathcal{H}^{(m)}_-$
  such that $\psi_{E,+}$ and $\psi_{E,-}$, and $\sigma^{++}$ and $\sigma^{--}$
  become identical. 

  If the outcome of the measurement is $+-$ or $-+$ we get a totally mixed
  state which has to be discarded.
\item \textbf{Step 4.} Take the next system and restart at Step 1 (using the
same $D$ and the same $V$). Repeat this procedure
until enough systems in the state $\sigma^{\pm\pm}$ are available for
standard distillation techniques like hashing \cite{VoWo03} and proceed
accordingly. 
\end{itemize}

Hence, for each integer $n$ describing the size of the input systems, we have
defined a family of distillation protocols which is parametrized by the integer
$m$, the real projection $D$ and the partial isometry $V$. A useful choice of
all three parameters requires knowledge about the covariance matrix $S$ of the
input state $\rho_S$. The quality of the outputs is then measured by the
probability $p(S,m,D)$ from Equation (\ref{eq:53}) and the fidelity $f(m,D,V,S)$
given in (\ref{fidd}). Both quantities can be calculated easily according to the
following two theorems.

\begin{thm} \label{thm:3}
For each quasifree state $\rho_S$ of $2m$ Fermionic modes with covariance
matrix $S$ the probability $p = \tr(\rho_S P_+)$ to get the result $+$ in a
parity measurement  (cf. Equation (\ref{eq:53})) is given by  
\begin{equation} \label{eq:4}
p = \frac{1 + (-4)^m \Pf\bigl(-i(S - \Bbb{1}/2\bigr))}{2},
\end{equation}
where $\Pf$ denotes the \emph{Pfaffian}\footnote{Note that the Pfaffian can be
defined in a basis free way for any antisymmetric operator on a real,
even-dimensional, oriented Hilbert space. The $-i$ factor under the Pfaffian
in Equation (\ref{eq:4}) is necessary to get a real (rather than purely
imaginary) operator.}.
\end{thm}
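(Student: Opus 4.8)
The plan is to reduce the statement to a computation of the parity expectation and then evaluate it with the quasifree Wick rule. Since $P_+ = (\Bbb{1} + \theta)/2$, we have $p = \tr(\rho_S P_+) = \tfrac12\bigl(1 + \tr(\rho_S\theta)\bigr)$, so everything comes down to showing that $\tr(\rho_S\theta) = (-4)^m\Pf(-i(S-\Bbb{1}/2))$. First I would fix a real basis $e_1,\dots,e_{4m}$ of $\mathcal{K}^{(2m)}$ (note $\dim\mathcal{K}^{(2m)} = 4m$) and insert the explicit form of the parity operator from Equation~(\ref{eq:50}) with $2m$ modes, namely $\theta = 2^{2m} i^{2m} B(e_1)\cdots B(e_{4m})$, so that $\tr(\rho_S\theta) = 2^{2m} i^{2m}\,\tr(\rho_S B(e_1)\cdots B(e_{4m}))$.

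The core step is to apply the quasifree expansion~(\ref{eq:38}) to this $4m$-fold product. Because the $e_a$ are real, $Je_a = e_a$, every contraction collapses to the matrix element $\langle Je_{p(2j-1)}, Se_{p(2j)}\rangle = S_{p(2j-1),p(2j)}$, where $S_{ab} = \langle e_a, Se_b\rangle$, and hence $\tr(\rho_S B(e_1)\cdots B(e_{4m})) = \sum_p\sign(p)\prod_{j=1}^{2m} S_{p(2j-1),p(2j)}$, the sum running over exactly the canonical pairings of~(\ref{eq:38}). This is the combinatorial formula for a Pfaffian, once one identifies an antisymmetric matrix. I would argue that the relevant matrix is $S - \Bbb{1}/2$: the diagonal is never hit in the pairing sum, and in a real basis the diagonal of $S$ is constant equal to $1/2$. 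The latter follows from $JSJ = \Bbb{1}-S$ in~(\ref{eq:45}) together with $Je_a = e_a$, which forces $S_{aa} = 1 - S_{aa}$. Thus $S - \Bbb{1}/2$ has vanishing diagonal and antisymmetric off-diagonal part; equivalently $-i(S-\Bbb{1}/2)$ is real and antisymmetric (as anticipated in the footnote, since $J(S-\Bbb{1}/2)J = -(S-\Bbb{1}/2)$ and $S=S^*$), and the pairing sum equals $\Pf(S-\Bbb{1}/2)$.

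It then remains to collect the scalar prefactors. Writing each Pfaffian entry of $S-\Bbb{1}/2$ as $-i$ times the corresponding entry of the real matrix $-i(S-\Bbb{1}/2)$ gives $\Pf(S-\Bbb{1}/2) = (-i)^{-2m}\Pf(-i(S-\Bbb{1}/2))$, and multiplying by the constant $2^{2m} i^{2m}$ from $\theta$ leaves a leading factor of modulus $4^m$ in front of $\Pf(-i(S-\Bbb{1}/2))$. I expect the main obstacle to be pinning down the sign of this factor, i.e.\ arriving at $(-4)^m$ rather than $+4^m$: the Pfaffian is defined only relative to an orientation, which is tied to the global sign of $\theta$ fixed in Section~\ref{sec:bipart-syst-entangl}, so one must check that the ordering of $e_1,\dots,e_{4m}$ entering Equation~(\ref{eq:50}) induces exactly the orientation used to read off the Pfaffian, and that the powers of $i$ are tracked consistently. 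A convenient consistency check is a Fock state: there $S=E$ is a basis projection, $-i(E-\Bbb{1}/2)$ has all eigenvalues $\pm i/2$ so that $|\Pf(-i(E-\Bbb{1}/2))| = 2^{-2m}$, and the prefactor of modulus $4^m$ correctly produces $\tr(\rho_S\theta)=\pm1$, matching the characterization of pure (Fock) states as precisely those with $\tr(\rho_S\theta)=\pm1$ discussed in Section~\ref{sec:bipart-syst-entangl}.
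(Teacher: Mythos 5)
Your reduction $p=\tfrac12\bigl(1+\tr(\rho_S\theta)\bigr)$, the insertion of $\theta = 2^{2m} i^{2m} B(e_1)\cdots B(e_{4m})$ from Equation~(\ref{eq:50}), and the identification of the Wick pairing sum with a Pfaffian (using $S_{aa}=1/2$ and the fact that $JSJ=\Bbb{1}-S$ together with $S=S^*$ makes $-i(S-\Bbb{1}/2)$ real and antisymmetric in a real basis) reproduce exactly the paper's computation in Appendix~\ref{sec:parity-operator}, which arrives at $\tr(\rho_S\theta)=2^{2m}(-1)^{2m}\Pf(\tilde S)=4^m\,\Pf\bigl(-i(S-\Bbb{1}/2)\bigr)$ relative to the orientation of the basis entering (\ref{eq:50}) (this is Equation~(\ref{eq:62}) with $n=2m$). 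But the step you yourself flag as ``the main obstacle'' --- whether the prefactor is $(-4)^m$ or $+4^m$ --- is a genuine gap, not a bookkeeping detail, and your Fock-state consistency check cannot close it: it only verifies $\bigl|\Pf\bigl(-i(E-\Bbb{1}/2)\bigr)\bigr|=2^{-2m}$, i.e.\ the modulus of the prefactor, while the sign is precisely what distinguishes the theorem's formula from its $(-1)^m$-twisted variant. Note also that the check you propose (that the ordering $e_1,\dots,e_{4m}$ in (\ref{eq:50}) induces the orientation used for the Pfaffian) would, if carried out, reveal a mismatch rather than agreement: the two conventions differ by exactly $(-1)^m$.

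The paper closes this gap with Lemma~\ref{lem:1}, which is the missing ingredient in your argument. The probability in (\ref{eq:53}) is defined with the sign of $\theta$ fixed by $\langle\psi_E,\theta\psi_E\rangle=+1$ for the quasifree maximally entangled reference state, so one must evaluate this expectation in the same orientation used to read off the Pfaffian. Choosing a real basis adapted to the partial isometry $Y_E$ (so that $E_{AA}=E_{BB}=\Bbb{1}/2$ and $E_{AB}=i\Bbb{1}/2$), the associated real antisymmetric matrix is $\tilde E = \frac12\left(\begin{array}{cc}0&\Bbb{1}\\-\Bbb{1}&0\end{array}\right)$, whose Pfaffian equals $(-1)^{m(2m-1)}2^{-2m}=(-1)^m 2^{-2m}$; hence $\langle\psi_E,\theta\psi_E\rangle=(-1)^m$ in that orientation. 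Renormalizing $\theta$ so that this expectation is $+1$ multiplies your $4^m$ by $(-1)^m$ and yields the stated $(-4)^m$. Without this explicit evaluation on a reference state of known parity (or an equivalent orientation computation --- note the paper also needs item~3 of the Proposition, $\theta\mapsto\det(R)\,\theta$ under real orthogonal basis changes, to make the Pfaffian well defined up to orientation), your derivation establishes the formula only up to the undetermined factor $(-1)^m$.
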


\begin{proof}  
  See Appendix \ref{sec:parity-operator}.
\end{proof}

\begin{thm} \label{thm:4}
The fidelity of a quasifree state $\rho_S$ with covariance matrix $S$ with
respect to a quasifree, pure state $\psi_E$ with covariance matrix $E$ is
given by 
\begin{equation} \label{eq:25}
\langle\psi_E, \rho_S \psi_E\rangle = \Pf\bigl(-i(\Bbb{1} - S - E)\bigr).
\end{equation}
\end{thm}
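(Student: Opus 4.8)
The plan is to read the matrix element as the overlap $\langle\psi_E,\rho_S\psi_E\rangle = \tr(\rho_E\rho_S)$ of two quasifree states, where $\rho_E = \kb{\psi_E}$ is the pure quasifree state with covariance matrix $E$, and to evaluate it by a Wick expansion followed by a Pfaffian resummation.

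First I would normalise $E$ away by a Bogolubov transformation. Since $\psi_E$ is a Fock state, $E$ is a basis projection and hence conjugate to the standard basis projection $E_0$ by a real orthogonal $R$ (i.e. $[R,J]=0$, $R E_0 R^* = E$); fixing the global sign of $\theta$ by $\langle\psi_E,\theta\psi_E\rangle = 1$, as in the discussion around (\ref{eq:50}), forces $R\in\mathrm{SO}$, so $\det R = 1$. Writing $\psi_E = \Gamma(R)\Omega$ with $\Omega$ the vacuum of $E_0$ and using $\Gamma(R)^*B(x)\Gamma(R) = B(R^*x)$ from (\ref{eq:46}), the overlap becomes $\langle\Omega,\rho_{S'}\Omega\rangle$ with $S' = R^*SR$, while the right-hand side is unchanged because $\Pf(-i(\Bbb{1}-S-E)) = \Pf\bigl(R^*(-i(\Bbb{1}-S-E))R\bigr) = \Pf(-i(\Bbb{1}-S'-E_0))$ by invariance of the Pfaffian under orientation-preserving orthogonal conjugation. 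It therefore suffices to prove the vacuum case $\langle\Omega,\rho_{S'}\Omega\rangle = \Pf(-i(\Bbb{1}-S'-E_0))$.

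Next, in a real basis $e_a$ adapted to $E_0$ I would write the rank-one vacuum projector as a product over the $m$ modes and re-express it through the Majorana operators $B_a = B(e_a)$: $\kb{\Omega} = \prod_{j=1}^m(\Bbb{1}-a_j^*a_j) = 2^{-m}\prod_{j=1}^m(\Bbb{1}+2iB_{2j-1}B_{2j})$, the sign of the quadratic term being the one selected by the chosen orientation. Expanding the product as $2^{-m}\sum_{T\subseteq\{1,\dots,m\}}(2i)^{|T|}\prod_{j\in T}B_{2j-1}B_{2j}$ and applying the quasifree formula (\ref{eq:38}) to each term, the trace $\tr\bigl(\rho_{S'}\prod_{j\in T}B_{2j-1}B_{2j}\bigr)$ becomes the Pfaffian of the principal submatrix of the real antisymmetric matrix $\hat S' = -i(S'-\Bbb{1}/2)$ indexed by $\bigcup_{j\in T}\{2j-1,2j\}$. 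This reduces the vacuum overlap to a weighted sum of principal Pfaffian minors of $\hat S'$.

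The final step is the resummation. Writing $\hat E_0 = -i(E_0-\Bbb{1}/2)$, which in the adapted basis is block diagonal with $2\times2$ blocks, I would invoke the Pfaffian addition (Laplace-type) identity $\Pf(A+B) = \sum_U \epsilon(U)\,\Pf(A[U])\,\Pf(B[U^c])$ with $A=\hat S'$ and $B=\hat E_0$: because $\hat E_0$ is block diagonal, $\Pf(\hat E_0[U^c])$ vanishes unless $U^c$ is a union of full blocks, so the sum over $U$ collapses precisely to the subset sum over $T$ above, and a short computation of the powers of $2$, the factors of $i$ and the sign $\epsilon(U)$ shows it equals $(-1)^m\Pf(\hat S'+\hat E_0) = \Pf(-i(\Bbb{1}-S'-E_0))$. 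Undoing the Bogolubov reduction then gives the claim. I expect the main obstacle to be exactly this sign bookkeeping — reconciling the orientation-fixed sign in the Majorana form of $\kb{\Omega}$ with the permutation sign $\epsilon(U)$ in the addition identity. As a guiding consistency check, the degenerate case $S'=E_0$ must return $\langle\Omega,\kb{\Omega}\Omega\rangle = 1$, and indeed $\Pf(-i(\Bbb{1}-2E_0)) = (-1)^m2^m\Pf(\hat E_0) = 1$ with the above blocks; this fixes the orientation sign unambiguously.
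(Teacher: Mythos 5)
The paper never proves Theorem \ref{thm:4} in the text: the proof is deferred wholesale to \cite{KS10}, so there is no in-paper argument to compare you against line by line, and your proposal must be judged on its own merits. On those terms it is sound and essentially complete as a strategy, and it buys what the paper's citation does not: a self-contained proof using only the machinery already set up in Section \ref{sec:math-prel} and Appendices A--B. Your three ingredients all check out. (i) The reduction step is correct: $\Gamma(R)^*\rho_S\Gamma(R)$ is quasifree with covariance $R^*SR$, the Pfaffian transforms with $\det(R)$, and your claim that fixing the sign of $\theta$ forces $\det R=1$ is justified by item 3 of the proposition in Appendix B, since $\Gamma(R)^*\theta\Gamma(R)=\det(R)\,\theta$ and the stabilizer of $E_0$ among real orthogonals has determinant $|\det u|^2=1$, so the sign of $\det R$ is unambiguous. (ii) With the paper's normalization $\{B_a,B_b\}=\delta_{ab}\Bbb{1}$ one indeed gets $\Bbb{1}-c_k^*c_k=\frac{1}{2}\bigl(\Bbb{1}\mp 2iB_kB_{k+n}\bigr)$ with the sign set by which of $c_k,c_k^*$ annihilates $\Omega$ and by the pairing order — exactly the convention dependence you flag. (iii) The minor-expansion identity $\Pf(A+B)=\sum_U\epsilon(U)\Pf(A[U])\Pf(B[U^c])$ is a standard fact (immediate from the Grassmann representation of the Pfaffian), and the collapse of the sum by block-diagonality of $\hat E_0$ is exactly right; note, though, that you should verify $\epsilon(U)=+1$ when $U^c$ is a union of adjacent pairs $\{2j-1,2j\}$ (true, since such pairings have even crossing number, but it is the one combinatorial fact your sketch leaves implicit). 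Two further remarks. Your device of pinning the overall sign by the single check $S'=E_0$ is legitimate, but only because every ambiguous choice in the derivation (pairing orientation in $\kb{\Omega}$, the sign convention in the addition identity, $\det R=1$) is independent of $S'$, so the residual ambiguity is one global sign; this should be said explicitly. And the orientation subtlety you identified is genuine: for $S=E$ the right-hand side is $\Pf\bigl(-i(\Bbb{1}-2E)\bigr)=(-2)^m\Pf\bigl(-i(E-\Bbb{1}/2)\bigr)$, which equals $+1$ only in the orientation fixed by $\langle\psi_E,\theta\psi_E\rangle=1$ — consistent with the paper's convention stated after Equation (\ref{eq:50}) and used in its proof of Theorem \ref{thm:3}, so Theorem \ref{thm:4} does implicitly carry this orientation choice and your treatment of it is the correct one.
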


\begin{proof}
  See \cite{KS10}.
\end{proof}

Hence the two quantities we are interested in become
\begin{equation} \label{eq:60}
p(S,m,D) =  \frac{1 + (-4)^m \Pf(DSD - \Bbb{1}/2)}{2} 
\end{equation}
\begin{multline} \label{eq:61}
f(S,m,D,V) = \frac{1}{2^{2m}} \Pf \left(
\begin{array}{cc}
D_AX_SD_A & D_AY_SD_B + V \\ -D_BY_S^*D_A -V & D_BZ_SD_B
\end{array}
\right) + \\ \frac{1}{2^{2m}} \Pf \left(
\begin{array}{cc}
D_AX_SD_B & D_AYD_B - V \\ -D_BY_S^*D_A +V & D_BZ_SD_B
\end{array}
\right),
\end{multline}
where we have used the fact that $X_{DSD} = D_A X_s D_A$ and similar relations
for $Y_S$ and $Z_S$ hold. 

Before we consider the question how to choose the
projection $D$ and the partial isometry $V$, let us have a short look on a
variant of the protocol given above. We can skip the twirl in Step 2 and
perform the joint parity 
measurement on the system in the state $\rho_{DSD}$. If the outcome is $+-$ or
$-+$ we drop the systems as before, and in the other case we get with
probability $p$ systems in the state
\begin{equation} \label{eq:75}
  \tilde{\rho} = \frac{p^{++}}{p} \rho_{DSD}^{++} + \frac{p^{--}}{p} \rho_{DSD}^{--},
\end{equation}
where we have identified (as above) the Hilbert spaces $\mathcal{H}^{(m)}_+
\otimes \mathcal{H}^{(m)}_+$ and $\mathcal{H}^{(m)}_- \otimes
\mathcal{H}^{(m)}_-$ such that $\psi_{E,+}$ and $\psi_{E,-}$ coincide. If we
twirl now with the symmetry group of $\psi_{E,\pm}$ we get the same output
states $\sigma^{\pm\pm}$ as in Step 3 above. In other words, nothing has
changed, apart from a reordering of steps. We will come back to this point,
however, in Section \ref{sec:two-four-modes} where we will compare our
protocol to a slightly different one.

%%%%%%%%%%%%%%%%%%%%%%%%%%%%%%%%%%%%%%%%%%%%%%%%%%%%%%%%%%%%%%%%%%%%%%%%%%%%%%%%
\section{The optimal protocol}
\label{sec:optimal-protocol}
%%%%%%%%%%%%%%%%%%%%%%%%%%%%%%%%%%%%%%%%%%%%%%%%%%%%%%%%%%%%%%%%%%%%%%%%%%%%%%%%

To distill entanglement from a given quasifree state $\rho_S$ with a protocol
from the family just introduced, the projection $D$ and the partial isometry
$V$ have to be chosen carefully, otherwise we do not get anything. The purpose
of this section is to show that there is a canonical choice, which provides
good or even optimal results. 

The crucial point is that any $S$ provides, by the \emph{polar decomposition}
\begin{equation}
Y_S = V_S |Y_S|
\end{equation}
of the off-diagonal block $Y = -2iS_{AB}$ (cf. Equation (\ref{eq:55})), a
partial isometry which is uniquely determined by the condition 
\begin{equation}
\Ran V_S = \Ran |Y_S| \subset \Ran I_B\quad  \text{and}\quad \ker V_S = \ker |Y_S|
\subset \Ran I_A.  
\end{equation}
Hence, any covariance matrix $S$ satisfying $\Ran |Y_S| = I_B$ and $\ker |Y_S| =
I_A$ (which holds iff $|Y_S|$ has maximal rank) defines a unique
quasifree, maximally entangled state via the partial isometry $V_S$. This
observation immediately suggests the following procedure to select $D$
and $V$:

\begin{itemize}
\item \textbf{Subsystem projection D.}
Consider an eigenbasis $e_k$, $k=1,\dots,4L$ of $|Y_S|$ such that the
corresponding eigenvalues are arranged in decreasing order, i.e.
\begin{equation} \label{eq:58}
|Y_S| e_k = \lambda_k e_k,\quad \lambda_1 \geq \lambda_2 \geq \dots \geq
\lambda_{4L} = 0. 
\end{equation}
Note that the kernel of $Y_S$ is by construction at least $2L$
dimensional. Hence $\lambda_{2L+1} = \cdots = \lambda_{4L} = 0$. Now choose
$2 \leq m \leq n$ small enough, but at least such that $\lambda_k > 0$
holds for all $k \leq 2m$ (we rediscuss strategies for the choice of $m$
below). With this preparation we can define $\hat{D} = \hat{D}_A \oplus
\hat{D}_B$ by
\begin{equation}
\hat{D}_B = \sum_{k=1}^{2m} \kb{e_k},\quad \hat{D}_A = V_S \hat{D}_B V_S^*. 
\end{equation}
This construction is in most cases basis-independent ($\hat{D}_B$ is just a
spectral projection of $|Y_S|$). The only exception arises if the eigenvalue
$\lambda_m$ is degenerate. In this case $\hat{D}_B$ depends on the choice of
a basis in the corresponding eingenspace.
\item \textbf{Partial isometry V.} Now consider the restricted state defined
by the covariance matrix $\hat{D}S\hat{D}$, and choose $\hat{V}$ such that 
\begin{equation} \label{eq:68}
Y_{\hat{D}S\hat{D}} = 
\hat{V} |Y_{\hat{D}S\hat{D}}|\quad \text{with}\quad
Y_{\hat{D}S\hat{D}} = -2i(\hat{D}S\hat{D})_{AB} = \hat{D}_A Y_S \hat{D}_B, 
\end{equation}
i.e. the maximally entangled state defined by $\hat{V}$ is the ``natural'' one
considered above.
\end{itemize}

We do not know yet, whether this choice really leads to a good distillation
rate, or whether it is even optimal. If we choose $m=2$ in Step 1 (i.e. the
output systems are qubit pairs) and hashing as the standard distillation
technique needed in Step 4 we get (where $S(\sigma^{\pm\pm})$ denotes the
entropy of the isotropic state $\sigma{\pm\pm}$ from Equation (\ref{eq:56})) 
\begin{align}
R &= p\max\bigl\{0,\bigl(1 - S(\sigma^{\pm\pm})\bigr)\bigr\}\\
&= p\max\bigl\{0, \bigl(1 + f\log_2(f) + (1-f)\log_2(1-f) -
(1-f)\log_2(3)\bigr) \bigr\}. \label{drate}
\end{align}
for the \emph{overall} distillation rate (cf \cite{VoWo03} for the rate of
hashing in any prime dimension). There are two problems with this
quantity. First of all if $S(\sigma^{\pm\pm}) > 1$, hashing does not lead to a
nonvanishing rate, although other strategies (like applying the recurrence
method first) might be more successful. Optimizing $R$ is pointless in these
cases. Second, even if $R > 0$ holds it is a very difficult function of $D$
and $V$ (cf. the expressions for $f$ and apply Theorems \ref{thm:3} and
\ref{thm:4}) and optimization is therefore a hard task. 

From a more general point of view, however, $R$ is just a product of $p$ and a
quantity which is monotone in $f$. This property it shares with 
\begin{equation}\label{eq:57}
pf = \langle \psi_E,\rho_S \psi_E \rangle + \langle \psi_{\tilde{E}},
\rho_S \psi_{\tilde{E}}\rangle, 
\end{equation}
which is independent of a special distillation strategy in Step 4 and easier
to optimize. As the rate $R$ it avoids strategies where the fidelity $f$ is
increased at the cost of the probability $p$ (which is possible, according to
Equation (\ref{fidd})). It shares with other fidelity quantities the drawback
that it can captures effects which are not related to entanglement at all
(e.g. local manipulation). In our case, however, and in particular for large
$f$, it provides a reasonably good figure of merit, which allows -- at least
for a large subclass of quasifree states $\rho_S$ -- the quantities $\hat{D}$
and $\hat{V}$ as optimal solutions. More precisely we have the following
theorem (the proof is given in the appendix).

\begin{thm} \label{thm:5}
Consider $2L$ Fermionic modes in a quasifree state $\rho_S$, an integer
$2 \leq m \leq L$, operators $D,V$, and the quantities $p(m,D,S)$ and
$f(m,D,V,S)$ from Equations (\ref{eq:60}) and (\ref{eq:61}). If $X_S = 0$
or $Z_S=0$ (cf. the notation from Equation (\ref{eq:55})), the product $pf$
satisfies
\begin{equation} \label{eq:59}
p(m,D,S) f(m,D,V,S) \leq p(m, \hat{D}, S) f(m, \hat{D}, \hat{V}, S) =
\prod_{k=1}^{2m} \frac{1+\lambda_k}{2} + \prod_{k=1}^{2m}\frac{1-\lambda_k}{2},
\end{equation}
where $\hat{D}$, $\hat{V}$ are defined above, and the $\lambda_k$ are the
$2m$ highest eigenvalues of $|Y_S|$ (cf. Equation (\ref{eq:58})).
\end{thm}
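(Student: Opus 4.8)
The plan is to convert the product $pf$ into a pair of determinants and then optimise these over $V$ and over $D$ separately. I would start from Equation (\ref{eq:57}), which writes $pf = \langle\psi_E,\rho_S\psi_E\rangle + \langle\psi_{\tilde E},\rho_S\psi_{\tilde E}\rangle$, and apply Theorem \ref{thm:4} to each summand. Since $\psi_E$ and $\psi_{\tilde E}$ are exactly the two quasifree maximally entangled states attached to $V$, differing only in the sign of the off-diagonal block, this reproduces the two-Pfaffian expression (\ref{eq:61}). Abbreviating $T = D_A Y_S D_B$ as a $2m\times 2m$ block between the ranges of $D_B$ and $D_A$, the matrices under the two Pfaffians carry $D_A X_S D_A$ and $D_B Z_S D_B$ on the diagonal and $\pm(T+V)$ off-diagonal.

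The first genuine step is to exploit the hypothesis. Suppose $Z_S=0$ (the case $X_S=0$ follows by exchanging the roles of Alice and Bob). Then the lower-right block vanishes, and I would invoke the elementary identity $\Pf\bigl(\begin{smallmatrix}A & B\\ -B^* & 0\end{smallmatrix}\bigr) = (-1)^{m(2m-1)}\det B$, valid for \emph{any} antisymmetric $A$ because the Pfaffian is independent of the block $A$ once the complementary diagonal block is zero. This collapses each Pfaffian to a determinant, giving $2^{2m}\,pf = (-1)^{m(2m-1)}\bigl[\det(T+V)+\det(T-V)\bigr]$.

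Next I would fix $D$ and optimise over $V$. Writing the polar decomposition $T = V_T|T|$ with $|T| = \mathrm{diag}(s_1,\dots,s_{2m})$ (the singular values of $T$), and substituting $V = V_T U$ with $U$ orthogonal, the bracket becomes $\det(V_T)\bigl[\det(|T|+U)+\det(|T|-U)\bigr]$. Expanding a determinant of (diagonal $+$ matrix) over index subsets gives $\det(|T|\pm U) = \sum_I\bigl(\prod_{i\in I}s_i\bigr)(\pm1)^{|I^c|}\det U_{I^c}$, where $U_{I^c}$ is the principal submatrix of $U$ on the complement of $I$; summing the two leaves only the even-size complementary minors, $\det(|T|+U)+\det(|T|-U) = 2\sum_{|J|\,\mathrm{even}}\bigl(\prod_{i\notin J}s_i\bigr)\det U_J$. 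Because $U$ is orthogonal every principal minor satisfies $|\det U_J|\le 1$, and since the $s_i\ge 0$ the whole expression is bounded in modulus by $2\sum_{|J|\,\mathrm{even}}\prod_{i\notin J}s_i = \prod_k(1+s_k)+\prod_k(1-s_k)$. As $2^{2m}pf$ is manifestly nonnegative, this yields $2^{2m}\,pf \le \prod_{k=1}^{2m}(1+s_k)+\prod_{k=1}^{2m}(1-s_k)$ for every admissible $V$, with equality attained by the polar isometry $V=V_T$.

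Finally I would optimise over $D$ and check saturation. Multiplying by the projections $D_A,D_B$ (norm $\le 1$) cannot increase singular values, so $s_k(D_AY_SD_B)\le s_k(Y_S)=\lambda_k$, the $k$-th largest eigenvalue of $|Y_S|$ from (\ref{eq:58}). The function $\prod_k(1+s_k)+\prod_k(1-s_k)=2\sum_{k\,\mathrm{even}}e_k(s)$ is nondecreasing in each $s_k\in[0,1]$ (its $s_1$-derivative is $\prod_{k\ge2}(1+s_k)-\prod_{k\ge2}(1-s_k)\ge0$), so raising each $s_k$ to $\lambda_k$ only increases it, producing the stated bound. For $\hat D,\hat V$ one computes $Y_{\hat D S\hat D}=\hat D_A Y_S\hat D_B = V_S\,\mathrm{diag}(\lambda_1,\dots,\lambda_{2m})$, whose singular values are exactly $\lambda_1,\dots,\lambda_{2m}$ and whose polar isometry is $\hat V=V_S$, so both optimisations are saturated and the right-hand side of (\ref{eq:59}) is reached. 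I expect the main obstacle to be the sign and orientation bookkeeping: one must verify that the factor $(-1)^{m(2m-1)}$ combines with the chosen orientation (the fixed sign of $\theta$) so that $\hat D,\hat V$ indeed realise the \emph{positive} branch of the bound rather than its negative, which is where a careless treatment of the Pfaffian-to-determinant reduction would fail.
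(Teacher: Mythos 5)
Your argument is correct, and it shares the paper's overall skeleton --- use $X_S=0$ or $Z_S=0$ to collapse the two Pfaffians in (\ref{eq:61}) to determinants, optimise over $V$ at fixed $D$, then dominate the singular values of $D_AY_SD_B$ by those of $Y_S$ --- but your middle step takes a genuinely different and leaner route. The paper (Proposition \ref{prop:1}) expands $\det(Y\pm V)$ in a real basis diagonalising $Y$, bounds the complementary minors $|\det V_\Sigma|\leq 1$, concludes that an optimal $V$ must be diagonal with entries $\pm 1$, and then still needs the separate combinatorial argument with $\mathrm{fid}(\Gamma_+,\Gamma_-)$ in Equations (\ref{eq:70})--(\ref{eq:71}) to show that the all-plus sign pattern wins. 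Your substitution $V=V_TU$ with $U$ orthogonal absorbs all sign patterns into $U$, so the single estimate $|\det U_J|\leq 1$ on principal minors plus the triangle inequality yields the bound $\prod_k(1+s_k)+\prod_k(1-s_k)$ in one stroke, with equality at $U=\Bbb{1}$; the paper's sign-pattern optimisation simply disappears. Your closing worry about the prefactor $(-1)^{m(2m-1)}\det(V_T)$ resolves itself: each summand of $pf$ in (\ref{eq:57}) is a fidelity, hence nonnegative, so at $U=\Bbb{1}$ your exact identity forces the overall sign to be positive and saturation is automatic --- no orientation bookkeeping beyond what you already used. Two minor points to tidy: (i) if $T=D_AY_SD_B$ is rank-deficient, $V_T$ is a proper partial isometry and $V=V_TU$ does not reach every admissible $V$; replace $V_T$ by an orthogonal extension $\tilde V_T$ (then $T=\tilde V_T|T|$ still holds and $U=\tilde V_T^*V$ is orthogonal) --- note the paper faces the same issue through the maximal-rank hypothesis of Proposition \ref{prop:1}; (ii) your singular-value domination $s_k(D_AY_SD_B)\leq \lambda_k$ is exactly what the paper derives by hand from the Courant--Fischer theorem, so invoking it as the standard compression inequality loses nothing, and your derivative check that $\prod_k(1+s_k)+\prod_k(1-s_k)$ is nondecreasing in each $s_k\in[0,1]$ supplies a detail the paper passes over as obvious.
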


\begin{proof}
  See Appendix \ref{sec:optimality-proof}.
\end{proof}

Up to now we have kept the number  $m$ (i.e. the modes left after Step 1)
fixed. The reason is that a good choice for $m$ depends much more than the
choice of $D$ and $V$ on the initial state $\rho_S$ and the distillation
strategy used in Step 4. The crucial question is up to which degree it can be
more beneficial to increase the dimension $d$ of the output system at the
expense of some fidelity (cf. the corresponding discussion in
\cite{VoWo03}). If, however, we are only going for fidelity $m=2$ is the best
choice.

\begin{kor} \label{kor:1}
Consider the same assumptions as in Theorem \ref{thm:5}. Then we have
\begin{equation}
p(2, \hat{D}, S) f(2, \hat{D}, \hat{V}, S) \geq p(m, \hat{D}, S) f(m,
\hat{D}, \hat{V}, S). 
\end{equation}
\end{kor}

\begin{proof}
  See Appendix \ref{sec:optimality-proof}.
\end{proof}

The condition $X_S=0$ or $Z_S=0$ used so far is restrictive, but the corresponding 
class of states is still fairly big and grows quadratically in
$L$. Furthermore it seems to affect only local correlations (i.e. only between
Alice's modes or only between Bob's modes). Therefore
we conjecture that the protocol given by the operators $\hat{D}$, $\hat{V}$
presented above is generically a good if not the best choice within the family
introduced in Section \ref{sec:family-dist-prot}. We will refer to it in the
following as \emph{the optimal protocol}, even if the state to distill does
not satisfy the given condition.

%%%%%%%%%%%%%%%%%%%%%%%%%%%%%%%%%%%%%%%%%%%%%%%%%%%%%%%%%%%%%%%%%%%%%%%%%%%%%%%%%%%%%%%%%%%%%%%%%%%%
\section{Two and four modes}
\label{sec:two-four-modes}
%%%%%%%%%%%%%%%%%%%%%%%%%%%%%%%%%%%%%%%%%%%%%%%%%%%%%%%%%%%%%%%%%%%%%%%%%%%%%%%%%%%%%%%%%%%%%%%%%%%%

Now we will consider two simple systems, which can be treated explicitly in
the general case (without using the conditions of theorem
\ref{thm:5}). The first is one Fermionic mode for each party, which is only meant
to illustrate that the fidelity optimized over quasifree maximally entangled
states coincides (at least in this simple example) with that over any
maximally entangled state (i.e., the maximally entangled fraction). The second
is the first non-trivial case, two Fermionic modes for each party. Here we are
not looking at the optimization of the fidelity along the lines of Theorem
\ref{thm:5}, but we postpone, as already proposed at the end of Section
\ref{sec:family-dist-prot}, the twirling in Step 2 and do it only at the end
after the parity measurement. This leads to a two qubit system, whose maximally
entangled fraction can be explicitly calculated and therefore compared to the
fidelity our protocol provides.

In an appropriately chosen real basis $e_A^{1,2}, e_B^{1,2}$
(c.f. equation (\ref{rb})), the off-diagonal blocks of a general covariance
matrix $S$ are diagonal (cf. the corresponding discussion in Appendix
\ref{sec:optimality-proof}) and $S$ therefore becomes 
\begin{equation} 
S=\frac{1}{2}
\left(\begin{array}{rrrr}1&ia&ic&0\\-ia&1&0&id\\-ic&0&1&ib\\0&-id&-ib&1
\end{array}\right)\label{2qb}
\end{equation}
with $a,b,c,d\in{\mathbb R}$, and the constraint $0\leq S\leq  \Bbb{1}$ now
reads 
\begin{equation} 
1+(ab-cd)^2\geq a^2+b^2+c^2+d^2\leq 2\ .
\end{equation}
To get the corresponding density matrix it is convenient to use the
Jordan-Wigner transformation 
\begin{equation} 
B(e^i_A) =\frac{1}{\sqrt{2}}\,\sigma^i\otimes \Bbb{1},\quad
B(e^i_B) =\frac{1}{\sqrt{2}}\,\sigma^3\otimes  \sigma^i\quad i=1,2 .
\end{equation} 
By means of the Wick theorem we can then determine the correlation matrix
$r_{ij}=\rho_S(\sigma^i\otimes\sigma^j)$ where 
$\sigma^0=\Bbb{1}$. The matrix turns out to have the form
\begin{equation} 
r=\left(\begin{array}{rrrc}1&0&0&b\\0&0&d&0\\0&-c&0&0\\a&0&0&ab-cd
\end{array}\right)\ .
\end{equation}

Now consider a quasifree maximally entangled state $\psi_E$.
The corresponding basis projection is 
parametrized by the real orthogonal matrix given by equation (\ref{bpp}), 
which is now $2\times 2$. The fidelity $\langle \psi_E, \rho_S \psi_E\rangle$ maximized over 
all such $E$ then becomes
\begin{equation} \label{eq:43}
\sup_Y \langle \psi_{E},\rho_S\psi_{E}\rangle = \frac{1}{4}\max\{1-ab+cd+|c+d|,
1+ab-cd+|c-d|\}
\end{equation} 
where the supremum is attained by $Y = \sign (c+d)\Bbb{1}$ if the first and by $Y = \sign (c-d)
\diag(-1,1)$ if the second expression is larger. Note that the right hand side of (\ref{eq:43})
coincide with the supremum of $\langle\psi,\rho_S \psi\rangle$ over \emph{all} (i.e. not
necessarily quasifree) maximally entangled states \cite{BDiVSW}. This
indicates that we do not give away  distillation  quality by restricting in
Step 2 to maximally entangled \emph{quasifree} states.  

If we look at it as a Fermionic bipartite system, the two mode case is too simple, because the local
algebras $\mathcal{A}, \mathcal{B}$ (cf. Equations (\ref{eq:72}) and
(\ref{eq:74})) are Abelian -- hence no quantum degrees of freedom remain
\emph{locally}.  The easiest non-trivial case requires two mode on each side, hence four modes in
total. Unfortunately the corresponding covariance matrix $S$ has 16 free parameter (even after
diagonalizing the off-diagonal blocks) and it seems to be reasonable to make further simplifying
assumption. One possible choice is to assume (considering again an
appropriately chosen real basis) that the submatrix $Y_S$ is not just diagonal, but a
multiple of the identity, i.e. $Y_S = \sigma \Bbb{1}$ and that the commutator $[X_S,Z_S]$ vanishes. We can then
transform the submatrices $X_S$ and $Z_S$ to a normal form\footnote{An antisymmetric matrix can be always 
be transformed to a form of having only two-dimensional antisymmetric blocks in the diagonal by a 
unitary. When $X_S$ and $Z_S$ commute this can be done simultaneously by a Bogolubov unitary, 
obviously leaving  $Y_S = \sigma \Bbb{1}$ invariant. In this case, the state is a product of 2-qubit states.}  
%(cf. Proposition \ref{prop:2}) 
and $S$ becomes 
\begin{equation}
S=\frac{1}{2}\left(
\begin{array}{cc|cc|cc|cc}
1&i\nu_1&&&i\sigma&&&\\
-i\nu_1&1&&&&i\sigma&&\\\hline
&&1&i\nu_2&&&i\sigma&\\
&&-i\nu_2&1&&&&i\sigma\\\hline
-i\sigma&&&&1&i\nu_3&&\\
&-i\sigma&&&-i\nu_3&1&&\\\hline
&&-i\sigma&&&&1&i\nu_4\\
&&&-i\sigma&&&-i\nu_4&1
\end{array}\right) \ .\label{4qbs}
\end{equation}
In this case basically everything can be calculated explicitly, however, many of the results are
too complicated and not very useful. Therefore we will only give a summary of the most important
points. 

\begin{itemize}
\item 
  The probability to get the outcome $++$ or $--$ during a joint parity measurement is
  \begin{equation}
    p =  \frac{1 + (-4)^d \Pf(S - \Bbb{1}/2)}{2} = \frac{1+(\sigma^2-\nu_1\nu_3)
      (\sigma^2-\nu_2\nu_4)}{2}
  \end{equation}
\item 
  After such a measurement the (untwirled) posteriori states $\rho_S^{+-}$ and
  $\rho_S^{-+}$ are always separable, while $\rho_S^{++}$ and $\rho_S^{--}$ can be
  entangled. Their concurrence,  when non-vanishing, has the form
  $\sim\left(4\sigma^2-|h(\nu_1,\nu_2,\nu_3,\nu_4)|\right)$, which shows the
  expected behaviour:  $\sigma\neq 0$ is necessary for entanglement.
\item 
  The entanglement fidelity $f=f(S,2,\Bbb{1},V_S)$ from Equation  (\ref{eq:61})
  where $V_S$ is the partial isometry given by the polar decomposition $Y_S =
  V_S |Y_S|$ has the form (i.e. this is the output fidelity of the optimal
  protocol discussed in Section \ref{sec:optimal-protocol}) 
  \begin{equation}  
    f=\frac{(\nu_1\nu_3-\sigma^2-1)(\nu_2\nu_4-\sigma^2-1)+4\sigma^2}{8p}\ \label{fmodprot}.
  \end{equation}
\end{itemize}

The interesting question is now, whether the ``canonical'' choice $\hat{V} =
V_S$ for the partial isometry $V$ (and therefore the corresponding quasifree,
maximally entangled state $\psi_{\hat{E}}$) maximizes the quantity $pf$ or
what is equivalent in this case\footnote{We do not consider a subsystem,
  i.e. $m=2$ and the unit operator is the only possible choice for $D$. The
  probability $p$, however, does not depend on $V$, but only on $D$.} the
fidelity $f$. Direct optimization of the latter (e.g. if the isoclinic
decomposition of $\mathrm{SO}(4)$ is used to parameterize the set of
quasifree, maximally entangled states) can be done explicitly, but is quite
difficult (and therefore omitted here).

Fortunately, we can easily trace the problem back to a known result. To this
end recall the remark from the end of Section \ref{sec:family-dist-prot} that
we can basically swap the steps 2 and 3 of our protocol and do the joint
parity measurement directly on the state $\rho_S$. If the measured parities
coincides, and if we identify the Hilbert spaces $\mathcal{H}^{(2)}_+ \otimes
\mathcal{H}^{(2)}_+$ and $\mathcal{H}^{(2)}_- \otimes \mathcal{H}^{(2)}_-$ as
described in Section \ref{sec:family-dist-prot} we get the mixture
\begin{equation}
  \tilde{\rho} = \frac{p^{++}}{p} \rho_{S}^{++} + \frac{p^{--}}{p} \rho_{S}^{--},
\end{equation}
as the output state (cf. Equation (\ref{eq:75})). Twirling with the symmetry
group of $\psi_{E,+} = \psi_{E,-}$ leads to the same isotropic state we would
get if we run the protocol in the usual order (cf. again Section
\ref{sec:family-dist-prot}). This implies in particular that we can calculate
the fidelity $\langle \psi_{E,\pm}, \tilde{\rho} \psi_{E,\pm}\rangle$ in terms
of Equation (\ref{fidd}), i.e.
\begin{equation}
  \langle \psi_{E,\pm}, \tilde{\rho} \psi_{E,\pm}\rangle = \frac{\langle
    \psi_E, \rho_S \psi_E \rangle + \langle \psi_{\tilde{E}}, \rho_S
      \psi_{\tilde{E}}\rangle}{p}.
\end{equation}
The left hand side of this equation if bounded by the maximal singlet
fraction, i.e. the supremum of $\langle \Psi, \tilde{\rho}, \Psi\rangle$ over
\emph{all} maximally entangled two qubit states. The latter can be calculated
according to \cite{BDiVSW} which leads to
\begin{equation}  
\supp_{\Psi} \langle \Psi, \tilde{\rho} \Psi\rangle = \max\{f,g\}\ ,\label{mef}
\end{equation}
where $f$ is given in Equation (\ref{fmodprot}) and $g$ is
\begin{equation}  
g=\frac{(\nu_1\nu_3-\sigma^2+1)(\nu_2\nu_4-\sigma^2+1)}{8p}\ .
\end{equation}
Hence we have to determine the parameters for which $f$ is bigger than $g$,
because in this case the entanglement of the isotropic output states
$\sigma^{\pm\pm}$ is as big as it could be, and the optimal protocol proposed
in Section \ref{sec:optimal-protocol} is really optimal (within the given
class) although the assumption of Theorem \ref{thm:5}  ($X_S=0$ or $Z_S=0$) is
not met. 

\begin{figure}[h]\begin{center}\includegraphics[width=7cm]{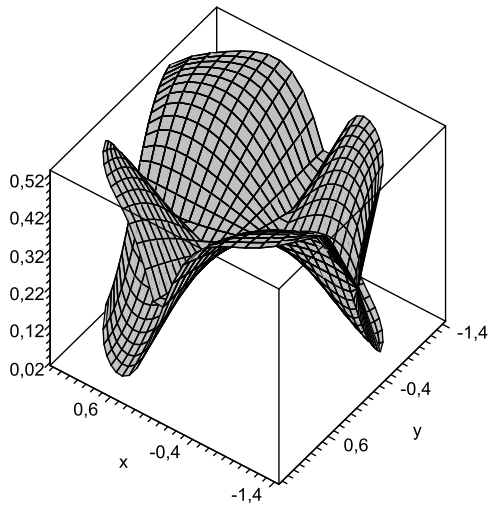}\end{center}{\caption
    {The figure shows two surfaces given by the fidelity $f$ of
      $\rho^4(x,y,\sigma=0.2)$ and its maximal singlet fraction
      $\max\{f,g\}$. Since $f>g$ for most of the values of the parameters, it is only at the left and right side, where we see that the two surfaces differ. This region shrinks if we increase $\sigma$ and disappears.}} 
\end{figure} 

We have done the last step numerically and chosen $x\equiv\nu_2\equiv\nu_1$
and  $y\equiv\nu_4\equiv\nu_3$. In this way we have a product of two identical
two-mode state $\rho^{(2)}(x, y,\sigma,\sigma)$ at hand as seen from
(\ref{4qbs}).  Using the notation $\rho^4(x,y,\sigma)$ for the state, we
compare the fidelity $f$ with the maximal entangled fraction (\ref{mef}). The
result is that $f \geq g$ for most of the parameter values, and in particular
for higher fidelities. 

%%%%%%%%%%%%%%%%%%%%%%%%%%%%%%%%%%%%%%%%%%%%%%%%%%%%%%%%%%%%%%%%%%%%%%%%%%%%%%%%
\section{Distillation from lattice Fermions}
\label{sec:dist-from-latt-1}
%%%%%%%%%%%%%%%%%%%%%%%%%%%%%%%%%%%%%%%%%%%%%%%%%%%%%%%%%%%%%%%%%%%%%%%%%%%%%%%%

To get another example for our scheme let us have a look at Fermions
propagating on an infinite, regular lattice. For simplicity, we will restrict
our discussion to one dimension, but everything we will say in this section,
can be easily generalized to higher dimensions.

The system we are looking at has infinite degrees of freedom. Therefore the
description used up to now has to be modified a  bit (cf
\cite{Araki87,MR0295702} for a detailed exposition). As a starting point
consider the Hilbert space $\mathcal{K}^{(\infty)}$ and the complex conjugation $J$ given
by 
\begin{equation} \label{eq:1}
  \mathcal{K}^{(\infty)} = \mathrm{l}^2(\Bbb{Z}) \oplus \mathrm{l}^2(\Bbb{Z}) \
  \text{and}\ \J(x,y) = (\overline{y}, \overline{x}).
\end{equation}
Now we can consider as before Majorana operators $B(x)$ smeared by elements 
$x \in \mathcal{K}^{(\infty)}$, operating irreducibly on a (now infinite
dimensional, but separable) Hilbert space $\mathcal{H}^{(\infty)}$ and
satisfying the CAR. As in finite dimensions, a normalized element $\Omega_S \in
\mathcal{H}^{(\infty)}$ describes a pure, quasifree state if its correlation
functions are given in terms of Wick's Theorem (cf. Equation (\ref{eq:38}))
and a basis projection $S \in \mathcal{B}(\mathcal{K}^{(\infty)})$, which is
defined as in the finite dimensional case. The Hilbert space
$\mathcal{H}^{(\infty)}$ and the operators $B(x)$ can be 
constructed as in Appendix \ref{sec:self-dual-formalism} in terms of Fock
spaces and creation/annihilation operators, but this is not important for us,
because the triple consisting of $\mathcal{K}^{(\infty)}$, $J$ and $S$ fixes
$\mathcal{H}^{(\infty)}$ and the $B(x)$ uniquely up to unitary equivalence.

Consider now a finite set $\Lambda$ of integers with cardinality $n$ and
define the projection $I_\Lambda \in \mathcal{B}(\mathcal{K}^{(\infty)})$
\begin{equation}
  (I_\Lambda x)_j = \chi_\Lambda(j) x_j,\quad j \in \Bbb{Z}
\end{equation}
where $\chi_\Lambda$ denotes the characteristic function of $\Lambda$. This
projection commutes with $J$ (i.e. it is a \emph{real} projection) and defines 
therefore a (finite dimensional) subsystem of the infinitely extended
system. It is completely described by the algebra
\begin{equation}
  \mathcal{A}_\Lambda = \SP \{ B(x_1) \cdots B(x_l) \, | \, l \in \Bbb{N},\
  I_\Lambda x_j = x_j \forall j=1,\dots,l \}
\end{equation}
which is isomorphic to the CAR algebra of $n$ modes. All operations,
measurements, etc., which only affect the modes located in $\Lambda$  (and leave
the rest of the system untouched) can be described in terms of this
algebra. It is acting reducibly on $\mathcal{H}^{(\infty)}$, but we can cure
this defect with a twisted tensor product. In analogy to (\ref{eq:48}) we can
introduce a unitary  
\begin{equation}
  U_\Lambda : \mathcal{H}^{(\infty)} \rightarrow \mathcal{H}^{(n)} \otimes
  \mathcal{H}^{(\infty)} 
\end{equation}
such that 
\begin{equation}
  U_\Lambda B(x) U_\Lambda^* = 
  \begin{cases}
    B(x) \otimes \Bbb{1} &  x \in I_\Lambda \mathcal{K}^{(\infty)} \cong
    \mathcal{K}^{(n)} \\ 
    \theta \otimes B(x) & x \in (\Bbb{1} - I_\Lambda) \mathcal{K}^{(\infty)}
    \cong \mathcal{K}^{(\infty)}, 
  \end{cases}
\end{equation}
where $\mathcal{H}^{(n)}$ denotes, as before, the Fermionic Fockspace of $n$
modes. Dropping the complementary subsystem (which is now infinite
dimensional) can be described again by the channel (cf. Equation (\ref{eq:51})
\begin{equation} \label{eq:63}
  \mathcal{B}(\mathcal{H}^{(n)}) \ni A \mapsto \Delta_\Lambda(A) = U_\Lambda^*
  A \otimes \Bbb{1} U_\Lambda \in \mathcal{B}(\mathcal{H}^{(\infty)}).
\end{equation}
After applying this operation, the remaining $n$ modes are in the quasifree
state $\rho_{S,\Lambda}$ with covariance matrix $S_\Lambda = I_\Lambda S
I_\Lambda$ and we end up with the structure studied in Section
\ref{sec:math-prel}. 

Now let us come back to distillation and assume that Alice and Bob share an
infinite 1-dimensional lattice containing a number (typically infinite) of
Fermions. They can control only two contiguous regions $\Lambda_A$
and $\Lambda_B$ of $L$ lattice sites each and located at a distance of $N$
sites.
Everything outside of $\Lambda = \Lambda_A \cup \Lambda_B$ can not be
manipulated and is therefore ignored\footnote{We ignore here the fact actions
  by Alice and Bob on their part of the system usually produces disturbances
  which are propagating along the chain. This approximation is justified if
  the time scale at which Alice and Bob operate is short compared to the
  propagation speed.}. In other words we apply the channel $\Delta_\Lambda$
defined in Equation (\ref{eq:63}) and get a bipartite Fermionic system
consisting of $2L$ modes where the projections $I_{\Lambda_A}$ and
$I_{\Lambda_B}$ play the role of $I_A$ and $I_B$ from Section
\ref{sec:bipart-syst-entangl}. Hence we can apply the optimal protocol studied
in Section \ref{sec:optimal-protocol} and calculate the success probability
$p(L,N)$ and the fidelity $f(L,N)$ of the output systems. Both quantities
depend on the geometry of the regions $\Lambda_A$ and $\Lambda_B$ (i.e. their
length $L$ and distance $N$) and to study this functional behavior gives a
deep insight how distillable entanglement is distributed along the chain.

Particularly interesting is the question, whether $p$ and $f$ can be made (as a
function of $L$) arbitrarily close to 1. This would imply that using a
sufficiently big chunk of the system is as good as using a sufficiently large
number of the same system in the same state (which is typically done in
ordinary distillation protocols such as hashing, and we have assumed this in
Section \ref{sec:family-dist-prot}, too). In this sense our result is closely
related to the question how much entanglement can be distilled with certainty
from \emph{one copy} of an infinitely extended system. This is called the
systems \emph{one copy entanglement} and studied in a number of papers
\cite{InfEnt,Eisert05a,KMSW06}.  

%%%%%%%%%%%%%%%%%%%%%%%%%%%%%%%%%%%%%%%%%%%%%%%%%%%%%%%%%%%%%%%%%%%%%%%%%%%%%%%%
\section{Free Fermions hopping on a 1-dimensional lattice}
\label{sec:free-ferm-hopp-1}
%%%%%%%%%%%%%%%%%%%%%%%%%%%%%%%%%%%%%%%%%%%%%%%%%%%%%%%%%%%%%%%%%%%%%%%%%%%%%%%%

As an explicit example for the discussion from the last section let us consider the unique ground
state of free Fermions hopping on a infinite, 1-dimensional, regular lattice\footnote{After a
  Jordan-Wigner transformation this becomes the XX-model. One copy entanglement of this model is
  studied in \cite{Eisert05a}; cf. also the references therein for other related results.}
(i.e. $\Bbb{Z}$). It is given by
\begin{equation}
  S= \scr{F}^{-1} \left( 
    \begin{array}{cc}
      E & 0 \\
      0 & 1-E
  \end{array} \right) \scr{F}
\end{equation}
where 
\begin{equation}
  \scr{K}^{(\infty)} \ni F \mapsto \scr{F}(F) \in \Lz(S^1) \otimes \Bbb{C}^2, \quad 
   \scr{F}(F)(x) = \frac{1}{\sqrt{2\pi}}\sum_{j=-\infty}^\infty e^{inx} F_n
\end{equation}
is the Fourier transform and $E \in \scr{B}(\Lz(S^1))$ the projection to the upper half-circle. Proofs
and further references for all of this can be found in \cite{MR810491}.

As in the last section we will restrict the state to the region $\Lambda$ with 
\begin{equation}
  \Lambda_A = [-L,0),\quad \Lambda_B = [N,N+L),\quad \Lambda = \Lambda_A
  \cup \Lambda_B.
\end{equation}
This results in the restricted density matrix $\rho_{I_\Lambda S I_\Lambda}$, with a covariance
matrix $I_\Lambda S I_\Lambda$ which becomes in an appropriately chosen real
basis 
\begin{equation}
  I_\Lambda S I_\Lambda = \left( 
    \begin{array}{cc}
      \Bbb{1}_A/2 +i X & i Y \\
      -i Y^T & \Bbb{1}_B/2 + i Z
    \end{array} \right)
\end{equation}
with
\begin{equation} \label{eq:12}
  X = Z =
  \left(\begin{array}{cc}
    0 & i F_0 \\
    -i F_0 & 0
  \end{array}\right),
  \quad Y = 
  \left(\begin{array}{cc}
    0 & i F_{N+L} \\
    -i F_{N+L} & 0
  \end{array}\right)
\end{equation}
and the $L \times L$ matrix $F_r$ ($r=0,N$) is given by
\begin{equation}
  (F_r)_{jk} =
  \begin{cases}
    0 & \text{if}\ j=k+r \\ 
    \frac{\sin\left(\frac{(j-k+r)\pi}{2}\right)}{(j-k+r) \pi} & \text{otherwise.}
  \end{cases}
\end{equation}

Our goal is now to consider the distillation of qubit pairs from bipartite systems in such a
state. The crucial step on the algorithmic side is to determine the subsystem
projection $\hat{D}$ and the partial isometry $\hat{V}$ from $I_\Lambda S I_\Lambda$. This is done
in terms of the singular value decomposition
\begin{equation}
  Y = \sum_{j=1}^{2L} \lambda_j \KB{\phi_j}{\Phi_j}, \quad \lambda_1 \geq
  \lambda_2 \geq \dots \lambda_{2L}
\end{equation}
of the matrix $Y$ from Equation (\ref{eq:12}). This gives rise to 
\begin{equation}
  \hat{D}_A = \sum_{j=1}^4 \kb{\phi_j},\quad \hat{D}_B = \sum_{j=1}^4
  \kb{\Phi_j}, \quad \hat{V} = \sum_{j=1}^4 \KB{\phi_j}{\Phi_j}.
\end{equation}
All other quantities (in particular the probability $p$ and the fidelity $f$)
can be derived from these three matrices (cf. Theorem \ref{thm:3} and
\ref{thm:4}). 

Unfortunately, even for the matrices $F_0,F_{N+L}$ with their very regular structure the singular value
problem is not explicitly solvable. They are, however, Toeplitz matrices and we can at least provide
an algorithm to derive $D_\Lambda S D_\Lambda$ which is quite efficient (in time and space) and
which allows to do numerical calculations (at least) in the range $L=10^4 - 10^6$. One possible
strategy (and the one we have used) is to use iterative procedures like the L\'anczos method, which
only needs matrix-vector multiplications. The latter can then be implemented by embedding the
Toeplitz matrix $T$ into a bigger circulant matrix $C$ and to calculate the product $Cx$ for a
column vector $x$ by a discrete Fourier transform. The product $Ty$ can then be calculated by
first padding $y$ with zeros to fit the size of $C$ (which results in a higher dimensional vector
$x$) and projecting out at the end the relevant dimensions from $Cx$. 

\begin{figure}[htbp]
  \centering
  \includegraphics[scale=.8]{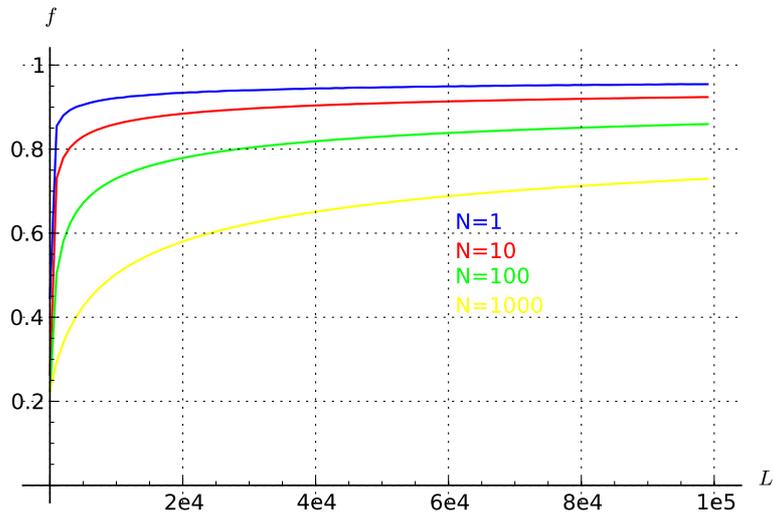}
  \caption{\label{fig:1} Plot of the entanglement fidelity $f$ as a function of the block
    length $L$ for fixed block distances $N=0,10,100,1000$.}
\end{figure}

\begin{figure}[htbp]
  \centering
  \includegraphics[scale=.8]{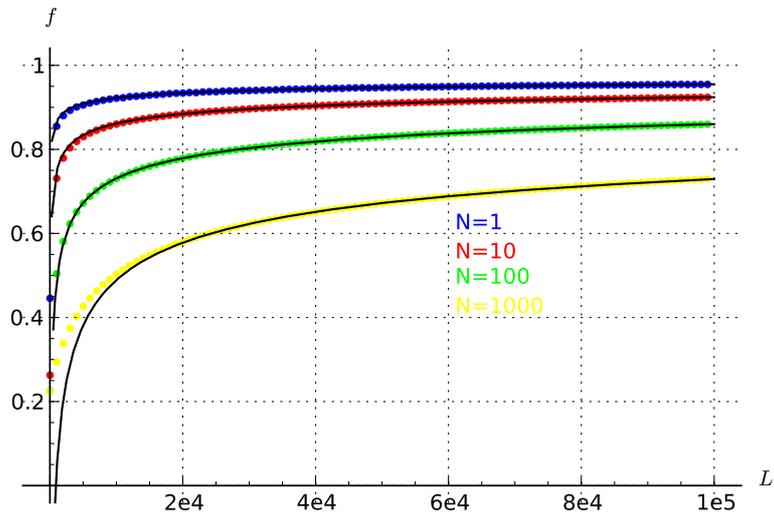}
  \caption{\label{fig:2} Entanglement fidelity $f(L)$ for $N=1,10,100,1000$ fitted with a
    model function (black curves) of the type $1-b/L^a$.}
\end{figure}

% We have applied this procedure to the distillation of entangled qubit pairs according to Protocol B
% or C. The entanglement fidelity $f$ of the isotropic states $\sigma^{\pm\pm}$ we get in Step 4B
% (which is identical with Step 4A) or of the state $\rho^{\pm\pm}_{D_\Lambda S D_\Lambda}$ from Step
% 3C is given by
% \begin{equation}
%   f = \frac{\langle\psi_P, \rho_{D_\Lambda S D_\Lambda} \psi_P\rangle + \langle\psi_Q,
%     \rho_{D_\Lambda S D_\Lambda} \psi_Q\rangle}{p} 
% \end{equation}
% with basis projections $P,Q$ from Equations (\ref{eq:26}) and (\ref{eq:10}) respectively and with
% the probability $p$ to get $++$ or $--$ during a joint parity measurement. All quantities can be
% easily calculated from the reduced covariance matrix $D_\Lambda S D_\Lambda$ which can be calculated
% along the lines described in the last paragraph. 

\begin{figure}[htbp]
  \centering
  \includegraphics[scale=.8]{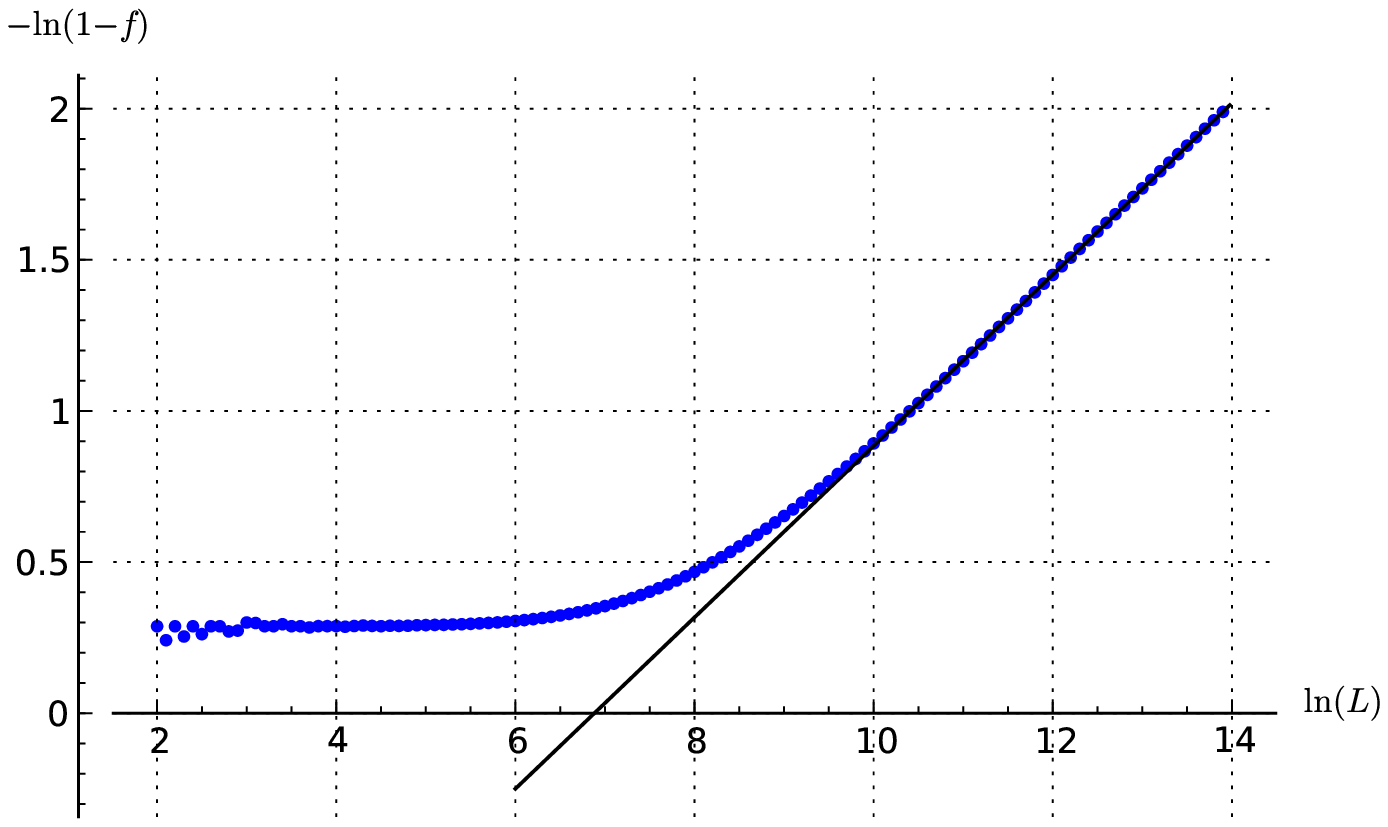}
  \caption{\label{fig:3} $-\log(1-f)$ as a function of $\log(L)$ for $N=1000$. For large $L$ the functional
    behavior seems to be linear (cf. the black curve).}
\end{figure}

Let us discuss now the result of our analysis. Figures \ref{fig:1} and
\ref{fig:2} show $f$ as a function of $L$ for fixed block distances
$N=1,10,100,1000$. In Figure \ref{fig:2} we have fitted the data with a model
function of the form 
\begin{equation}
  f(L,N) = 1 - \frac{b(N)}{L^{a(N)}}.
\end{equation}
It is easy to see that for small values of the block length $L$ this fit is not very good (in
particular for larger block distance $N$). Therefore we have used only data $(L,f(N,L))$ with $L\geq
20000$ for calculating the fit parameters $a$ and $b$ (using a least square fit). This difference
between the behavior for small and large $L$ becomes much clearer in Figure \ref{fig:3} where we
have plotted $-\log(1-f)$ against $\log(L)$. The linear part for bigger values of $\log(L)$
correspond to the $1-b/L^a$ behavior in Figure \ref{fig:2}. 
\begin{figure}[htbp]
  \centering
  \includegraphics[scale=.8]{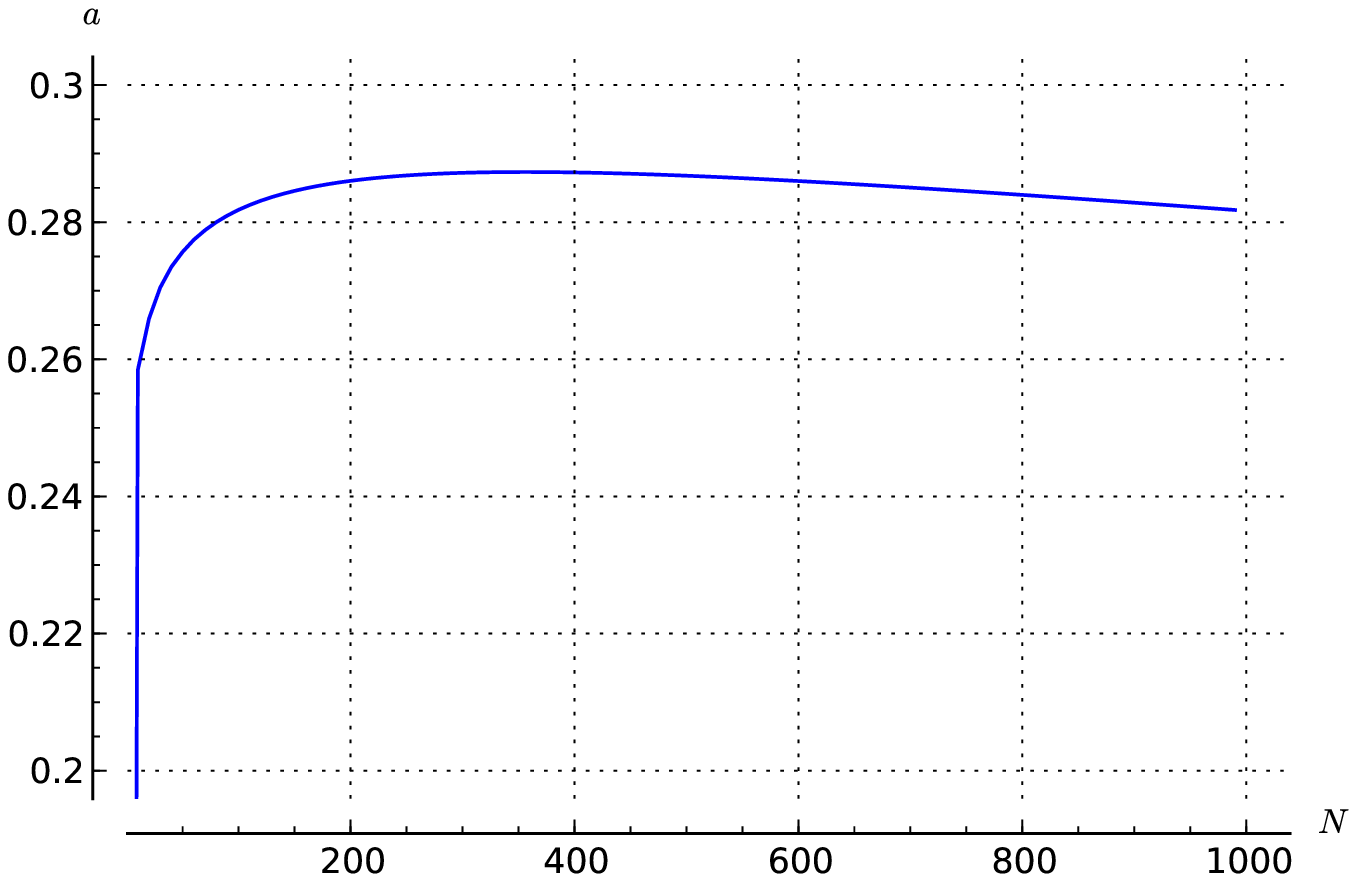}
  \caption{\label{fig:4} Fit parameter $a$ as a function of $N$. The black curve represents the
    average of the values for $N \geq 400$.}
\end{figure}

\begin{figure}[htbp]
  \centering
  \includegraphics[scale=.8]{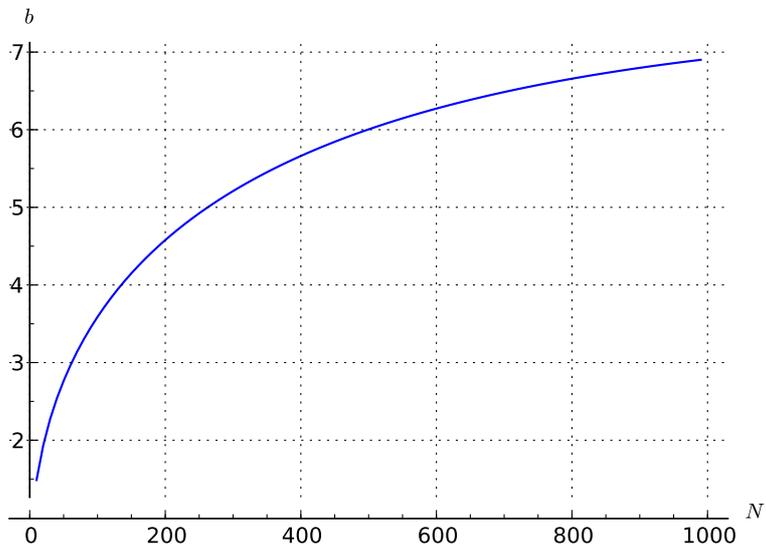}
  \caption{\label{fig:5} Fit parameter $b$ as a function $N$. For $N>500$ the data is fitted
    logarithmically.}
\end{figure}

To get an idea how the two fit parameter $a$ and $b$ depend on $N$ we have made a least square fit
for the data $(L,f(L,N))$ with $L = 10 + 1000k$, $k \in [20,100)$ and for all $N = 0 + 10n$ with
$n \in [0,100)$. The results are shown in Figures \ref{fig:4} and \ref{fig:5}. We can see that for
larger values of $N$ the parameter $a$ is almost constant, while $b$ seems to grow
sub-linearly. Note that it is dangerous to say more at this point, because small fluctuations can be
a consequence of the fitting algorithm rather than a real physical effect.

More insight we can get from a look on the function $L(N,x)$ which provides the minimal length
of the intervals $\Lambda_{A/B}$ we need to get\footnote{Note that this is the quantity proposed to
  be studied in \cite{KMSW06}.} $f(L(N,x),N)\geq x$. The result is shown in Figure \ref{fig:6}. We
can see that the behavior looks linear and close inspection of the numerical data indeed shows that
at least for large values of $N$ the behavior is as linear as it can be for a function which only
takes integer values.

\begin{figure}[htbp]
  \centering
  \includegraphics[scale=.8]{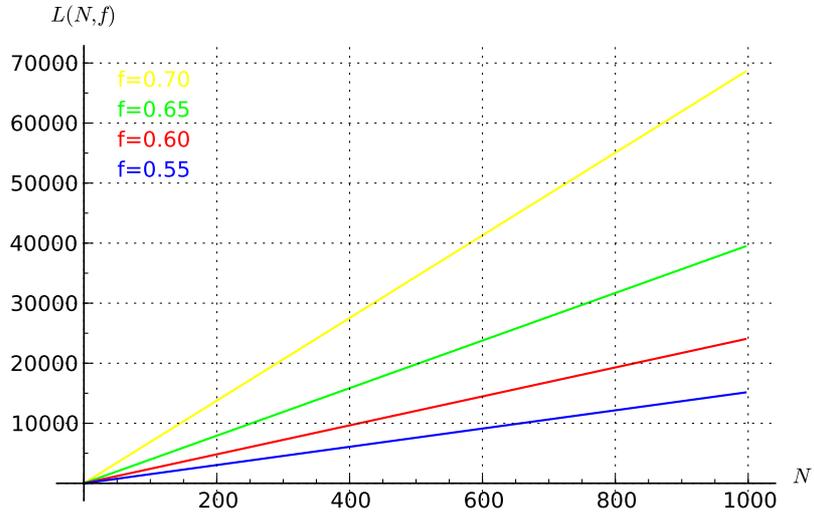}
  \caption{\label{fig:6} Minimal interval length $L(N,x)$ needed to get a fidelity $f\geq x$
  if the distance is $N$.} 
\end{figure}

Another important function we can investigate is the probability $p$ to get during a joint parity
measurement on the restricted system in the state $\rho_{D_\Lambda S D_\Lambda}$ the results $++$ or
$--$. 

\begin{figure}[htbp]
  \centering
  \includegraphics[scale=.8]{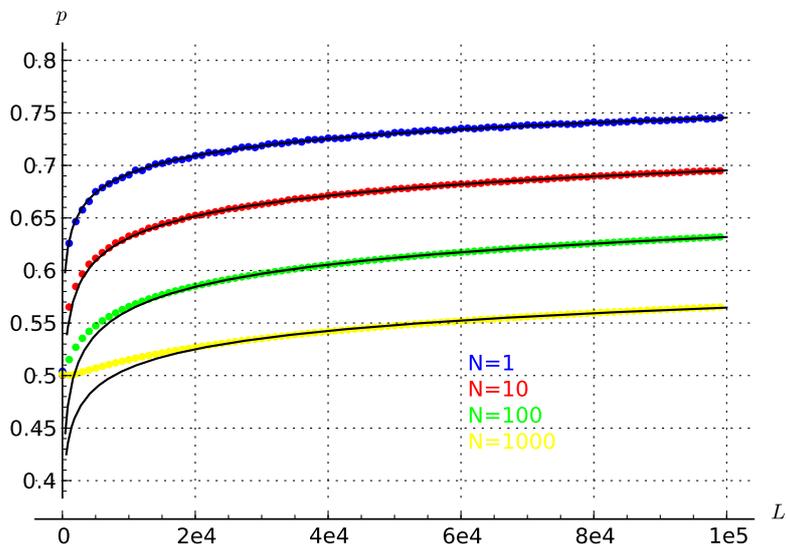}
  \caption{\label{fig:7} Probability $p(L,N)$ for $N=1,10,100,1000$ fitted with a
    model function (black curves) of the type $1-b/L^a$.}
\end{figure}

\begin{figure}[htbp]
  \centering
  \includegraphics[scale=.8]{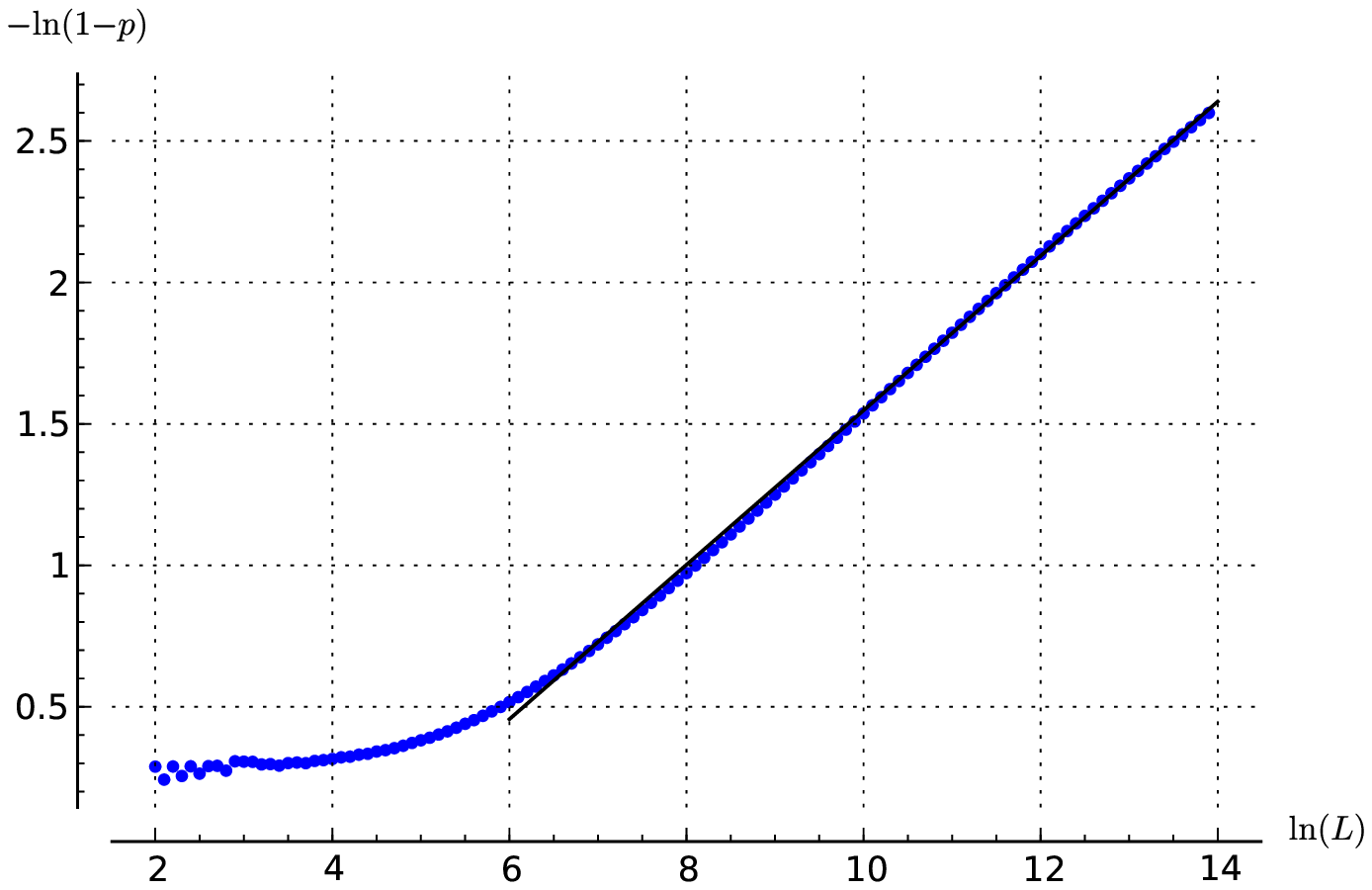}
  \caption{\label{fig:8} $-\log(1-p)$ as a function of $\log(L)$ for $N=1000$. For large $L$ the functional
    behavior is linearly approximated.}
\end{figure}

In Figure \ref{fig:7} we can see that the behavior of $p$ as a function of $L$ looks quite similar
as $f(L)$ from Figure \ref{fig:2}. Again we can find a good fit for larger $L$ with a function of
the type $1-a/L^b$. This is also confirmed by Figure \ref{fig:8} where we have again plotted
$-\log(1-p)$ as function of $\log(p)$. The behavior of $p$ for small $L$ and $N$ seems, however, to
be more irregular than the one for $f$ and if we look very closely at Figure \ref{fig:8} we can see
that the linear fit is not as good as in Figure \ref{fig:3} (the blue curve is bending a little bit
around the black line). 

%%%%%%%%%%%%%%%%%%%%%%%%%%%%%%%%%%%%%%%%%%%%%%%%%%%%%%%%%%%%%%%%%%%%%%%%%%%%%%%%%%%%%%%%%%%%%%%%%%%%
\section{Conclusions}
%%%%%%%%%%%%%%%%%%%%%%%%%%%%%%%%%%%%%%%%%%%%%%%%%%%%%%%%%%%%%%%%%%%%%%%%%%%%%%%%%%%%%%%%%%%%%%%%%%%%

We have presented a distillation scheme which can be applied to a bipartite Fermionic system in a
generic quasifree state, and while it is in general not optimal we have found substantial
evidence that we do not waste too much entanglement originally hidden in the given state. The
asymptotic rate of this protocol (or equivalently the entanglement fidelity of the isotropic states
we get at an intermediate step) can be calculated explicitly and can therefore be used as tool to
analyze entanglement of Fermionic systems.

The latter is particularly true for lattice Fermions. The analysis of the previous section showed
that a numerical study is possible for effectively very large systems up to $10^6$ lattice
sites. We have only studied a very special quasifree state here, but the algorithms used can be
easily generalized to other translation invariant quasifree states: Their covariance operator is
a ($2 \times 2$-matrix valued) multiplication operator and the matrix $I_\Lambda
S I_\Lambda$ of the restricted state (which can be easily calculated by
Fourier integrals) is again a (block-)Toeplitz operator. 

The results about free Fermions on a lattice, derived in the last section  are interesting in their
own right and indicate some structure in the fidelity $f(N,L)$ and the block length $L(N,f)$. Note
that the latter is closely related to a quantity proposed in \cite{KMSW06} to analyze states with
infinite one-copy entanglement. The numerical study presented so far does not tell us the whole
story of course, but it gives us some hints what we might expect from a more detailed, analytic
study. The latter is of course difficult but a deeper discussion of the asymptotic behavior of $f$
and $L$ (which is related to the asymptotic behavior of the singular value decomposition of the
off-diagonal blocks $Y$) seems to be a realistic goal. 

%%%%%%%%%%%%%%%%%%%%%%%%%%%%%%%%%%%%%%%%%%%%%%%%%%%%%%%%%%%%%%%%%%%%%%%%%%%%%%%%%%%%%%%%%%%%%%%%%%%%
\section*{Acknowledgement}
%%%%%%%%%%%%%%%%%%%%%%%%%%%%%%%%%%%%%%%%%%%%%%%%%%%%%%%%%%%%%%%%%%%%%%%%%%%%%%%%%%%%%%%%%%%%%%%%%%%%

This work was supported by the EU FP7 FET-Open project COQUIT (contract number 233747).

\begin{appendix}

%%%%%%%%%%%%%%%%%%%%%%%%%%%%%%%%%%%%%%%%%%%%%%%%%%%%%%%%%%%%%%%%%%%%%%%%%%%%%%%%
\section{The self-dual formalism}
\label{sec:self-dual-formalism}
%%%%%%%%%%%%%%%%%%%%%%%%%%%%%%%%%%%%%%%%%%%%%%%%%%%%%%%%%%%%%%%%%%%%%%%%%%%%%%%%

The purpose of this appendix to connect the self-dual formalism presented in
Section \ref{sec:math-prel} to the more familiar description in terms of
creation and annihilation operators $c_j,c_j^*$, $j=1,\dots,n$ operating on
Fermionic  Fockspace $\mathcal{H}^{(n)} = \mathcal{F}_-(\Bbb{C}^n)$. They
satisfy the canonical anticommutation relations (CAR) 
\begin{equation} \label{eq:19}
  \{c_j,c_k\} = \{c_j^*,c_k^*\} = 0,\quad \{c_j,c_k^*\} = \delta_{jk} \Bbb{1}
\end{equation}
and they are irreducible, i.e. for any operator $A$ on $\mathcal{H}^{(n)}$ we
have 
\begin{equation}
  [A,c_j] = [A,c_k^*] = 0 \ \forall j,k = 1,\dots,n \Rightarrow A = \lambda \Bbb{1}
\end{equation}
for some $\lambda \in \Bbb{C}$. The latter implies in particular that the
*-algebra generated by the $c_j, c_k^*$ coincides with
$\mathcal{B}(\mathcal{H}^{(n)})$. 

Alternatively we can introduce the Majorana operators
\begin{equation}
  B_a = 
  \begin{cases}
    \frac{1}{\sqrt{2}} (c_a + c_a^*) & \text{if $a = 1,\dots,n$}\\
      \frac{i}{\sqrt{2}} (c_{a-n} - c_{a-n}^*) & \text{if $a=n+1,\dots,2n$}.
  \end{cases}
\end{equation}
The $B_a$ are selfadjoint and from (\ref{eq:19}) we immediately get
\begin{equation} \label{eq:44}
  \{ B_a, B_b \} = \delta_{a,b} \Bbb{1}.
\end{equation}
Irreducibility of the $c_j,c_k^*$ implies irrducibility of the $B_a$ and vice
versa. 

Finally we can smear the $B_a$ with a ``test function'' $x \in \Bbb{C}^{2n}$
\begin{equation}
 B(x) = \sum_{a=1}^{2n} x_a B_a,\quad x = (x_a)_{a=1,\dots,2n} \in
 \mathcal{K}^{(n)} = \Bbb{C}^{2n}
\end{equation}
to get the structure we have started with in Section \ref{sec:math-prel}. Note
that in this case we have $\mathcal{K}^{(n)} = \Bbb{C}^{(2n)}$ and $J$ is
given by complex conjugation in the canonical basis ($(Jx)_a = \bar{x}_a$).
It is easy to see that the $B(x)$ are irreducible, linear in $x$ and satisfy
the CAR in the form (\ref{eq:76}).

If we start with the smeared operators we can derive the $B_a$ by $B_a =
B(e_a)$ if $e_a$ denotes the canonical basis in $\mathcal{K}^{(n)}$. Instead
of the $e_a$, however, we can use \emph{any} real basis $\tilde{e}_a, a=1,
\dots, 2n$ of $\mathcal{K}^{(n)}$ to get an irreducible  family of selfadjoint
operators $\tilde{B}_a = B(\tilde{e}_a)$ on $\mathcal{H}^{(n)}$, which satisfy
the CAR (\ref{eq:44}) and are therefore as good to describe the system as the
orginal $B_a$. In other words: The smeared operators $B(x)$ provide a
coordinate free description of the system, while the $B_a$ (and the
corresponding $c_j,c_k^*$) are coordinate dependent.  

This distinction is not purely academic, because the $c_j,c_k^*$ (and
therefore the $B_a$ from which the $c_j,c_k^*$ can be derived) are accompanied
by a distinguished state: The Fock vacuum which is given by the condition
$c_k^* \Omega = 0$ $\forall k=1,\dots,n$. The smeared operators $B(x)$,
however, does not prefer \emph{any} state -- we have to choose a real basis
first. 

To make the last statement more precise let us show how the definition of Fock
states in Section \ref{sec:math-prel} is related to the ordinary Fock
vacuum. To this end let us consider a basis projection $E \in
\mathcal{B}(\mathcal{K}^{(n)})$ and introduce orthonormal bases
$(f_j^{(1)})_{j=1,\dots,n}$ and $(f_k^{(2)})_{k=1,\dots,n}$  in the range
$\Ran E \subset \mathcal{K}^{(n)}$ and kernel $\ker E \subset
\mathcal{K}^{(n)}$ of $E$ respectively. Obviously  
\begin{equation}
  e_a = \begin{cases}
    \frac{1}{\sqrt{2}} (f_a^{(1)} + f_a^{(2)}) & \text{if $a = 1,\dots,n$}\\
      \frac{i}{\sqrt{2}} (f_{a-n}^{(1)} - f_{a-n}^{(2)}) & \text{if $a=n+1,\dots,2n$}.
  \end{cases}\label{rb}
\end{equation}
is a real basis and a short calculation shows that the vacuum state $\Omega
\in \mathcal{H}^{(N)}$ given by 
\begin{equation}
  B_a = B(e_a),\quad c_k = \frac{1}{\sqrt{2}} (B_k + i B_{k+n}),\quad c_k^*
  \Omega = 0\ \forall k=1,\dots,n
\end{equation}
satisfies
\begin{equation}
  \langle \Omega, B(x_1) B(x_2) \Omega \rangle = \langle J x_1, E x_2 \rangle.
\end{equation}
Validity of the relations in (\ref{eq:38}) follows from Wick's
theorem. Therefore $\Omega \in \mathcal{H}^{(n)}$ is the Fock state with
covariance operator $E$.

%%%%%%%%%%%%%%%%%%%%%%%%%%%%%%%%%%%%%%%%%%%%%%%%%%%%%%%%%%%%%%%%%%%%%%%%%%%%%%%%%%%%%%%%%%%%%%%%%%%%
  \section{The parity operator}
  \label{sec:parity-operator}
%%%%%%%%%%%%%%%%%%%%%%%%%%%%%%%%%%%%%%%%%%%%%%%%%%%%%%%%%%%%%%%%%%%%%%%%%%%%%%%%%%%%%%%%%%%%%%%%%%%%

  In this appendix we will present a proof of Theorem \ref{thm:3} and Equation
  (\ref{eq:50}). Unless something else is explicitly stated we will use the
  notations from Section \ref{sec:math-prel}. The core result is the following proposition.

  \begin{prop}
    Let $e_a \in \mathcal{K}^{(n)}$, $a=1,\dots,2n$ be a real basis of the
    reference space $\mathcal{K}^{(n)}$ and define
    \begin{equation}
      \theta = 2^n i^n B(e_1) \dots B(e_{2n});
    \end{equation}
    then the following statements hold:
    \begin{enumerate}
    \item \label{item:2}
      $\theta$ is a selfadjoint unitary.
    \item \label{item:4}
      $\theta$ implements the \emph{parity automorphism} $\Theta$, i.e. $\theta B(x)
      \theta = -B(x)$ for all $x \in \mathcal{K}^{(n)}$.
    \item \label{item:3}
      For each unitary $R \in \mathcal{B}(\mathcal{K}^{(n)})$ commuting with
      $J$ (i.e. each real orthogonal transformation of
      ($\mathcal{K}^{(n)}$,$J$)) we have
      \begin{equation} \label{eq:64}
        \theta = \det(R) i^n 2^n B(R e_1) \dots B(R e_{2n}).
      \end{equation}
    \end{enumerate}
  \end{prop}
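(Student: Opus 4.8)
The plan is to strip everything down to the Clifford structure of the normalized generators $\beta_a = \sqrt{2}\,B(e_a)$. Since the $e_a$ form a real orthonormal basis, the CAR (\ref{eq:76}) give $\beta_a^* = \beta_a$, $\beta_a^2 = \Bbb{1}$ and $\beta_a\beta_b = -\beta_b\beta_a$ for $a\neq b$, and in these terms $\theta = i^n\,\beta_1\cdots\beta_{2n}$ is a product of $2n$ pairwise anticommuting selfadjoint involutions. For statement (\ref{item:2}) I would first compute $\theta^*$ by reversing the order of the factors: reversing $2n$ pairwise anticommuting elements costs the sign $(-1)^{\binom{2n}{2}} = (-1)^{n(2n-1)} = (-1)^n$ (as $2n-1$ is odd), so $\theta^* = (-i)^n(-1)^n\beta_1\cdots\beta_{2n} = i^n\beta_1\cdots\beta_{2n} = \theta$. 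Unitarity then reduces to $\theta^2 = \Bbb{1}$: the product $U=\beta_1\cdots\beta_{2n}$ is unitary (a product of selfadjoint unitaries) with $U^* = (-1)^n U$, whence $U^2 = (-1)^n$ and $\theta^2 = i^{2n}(-1)^n = (-1)^n(-1)^n = \Bbb{1}$.

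For statement (\ref{item:4}) it suffices, by complex linearity of $B$, to check the relation on the real basis. The key observation is that $\theta$ anticommutes with each generator: conjugating $U = \beta_1\cdots\beta_{2n}$ by a single $\beta_a$ flips the sign of the $2n-1$ factors distinct from $\beta_a$, so $\beta_a U \beta_a = (-1)^{2n-1}U = -U$, i.e.\ $\theta B(e_a) = -B(e_a)\theta$ (the scalar $i^n$ is irrelevant to commutation). Using $\theta^2 = \Bbb{1}$ this gives $\theta B(e_a)\theta = -B(e_a)$, and linearity extends it to $\theta B(x)\theta = -B(x)$ for all $x$, which is exactly (\ref{eq:47}).

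The substance of the proposition is statement (\ref{item:3}), and this is where I expect the real work. Since $R$ is unitary and commutes with $J$, the vectors $Re_a$ again form a real orthonormal basis, so by the first two parts $\theta' = i^n 2^n B(Re_1)\cdots B(Re_{2n})$ is itself a selfadjoint unitary implementing the parity automorphism. By irreducibility (\ref{eq:77}) any two implementers differ by a scalar, and for two selfadjoint unitaries that scalar is $\pm 1$; hence $\theta' = \pm\theta$ and the whole problem collapses to fixing this sign. To do so I would expand $B(Re_a) = \sum_b R_{ba}B(e_b)$ with real entries $R_{ba} = \langle e_b, Re_a\rangle$, multiply out, and read off the coefficient of the top-degree monomial $B(e_1)\cdots B(e_{2n})$: only the terms where $(b_1,\dots,b_{2n})$ is a permutation $\sigma$ of $(1,\dots,2n)$ contribute, each reordering producing $\sign(\sigma)$, so the coefficient assembles into $\sum_\sigma \sign(\sigma)\prod_k R_{\sigma(k)k} = \det(R)$. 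Comparing with the top-degree coefficient $i^n 2^n$ of $\theta$ forces the sign to equal $\det(R)$.

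The main obstacle is the bookkeeping in this last step. I would need to justify that the ordered products $\{B(e_{a_1})\cdots B(e_{a_k}) : a_1 < \cdots < a_k\}$ form a linear basis of $\mathcal{B}(\mathcal{H}^{(n)})$ — which follows from irreducibility together with the dimension count $\sum_k\binom{2n}{k} = 2^{2n} = \dim\mathcal{B}(\mathcal{H}^{(n)})$ — so that extracting a single coefficient is legitimate, and then carefully separate the genuine permutation terms (which give the top monomial) from the terms with repeated indices (which, via $B(e_a)^2 = \Bbb{1}/2$, only feed lower-degree monomials and therefore cannot affect the comparison). Once the monomial basis is in place, the determinant emerges from the signature sum and the identity $\theta' = \det(R)\,\theta$ follows.
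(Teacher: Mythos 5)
Your proof is correct, and while items (1) and (2) follow essentially the paper's own argument (the sign count $(-1)^{n(2n-1)}=(-1)^n$ for reversing the factors, and conjugation by a single generator giving $(-1)^{2n-1}=-1$, merely streamlined by the normalization $\beta_a=\sqrt{2}\,B(e_a)$), your item (3) takes a genuinely different route. Both proofs first reduce, via irreducibility and selfadjoint unitarity, to $\theta_R=\pm\theta$; the divergence is in how the sign is fixed. The paper does it state-theoretically: it evaluates both operators in a quasifree state, obtaining from Wick's theorem $\tr(\rho_S\theta)=2^n(-1)^n\Pf(\tilde S)$ and then invoking the Pfaffian transformation law $\Pf(R^*\tilde SR)=\det(R)\Pf(\tilde S)$. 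You do it purely algebraically: expand $B(Re_1)\cdots B(Re_{2n})$ over the ordered-monomial (Clifford) basis and read $\det(R)$ off the top-degree coefficient via the Leibniz formula. Your two supporting claims — that the $2^{2n}$ ordered monomials are linearly independent (irreducibility plus the dimension count $\sum_k\binom{2n}{k}=2^{2n}=\dim\mathcal{B}(\mathcal{H}^{(n)})$), and that repeated-index terms only feed strictly lower-degree monomials via $B(e_a)^2=\Bbb{1}/2$ — are both right and make the coefficient extraction rigorous. Your version is more elementary and self-contained: it needs no Pfaffian identities and no auxiliary state, and it quietly sidesteps the fact that the paper's comparison $\tr(\rho_S\theta_R)=\det(R)\tr(\rho_S\theta)$ only pins down the sign once one exhibits a state with $\tr(\rho_S\theta)\neq 0$ (e.g.\ a Fock state, where the value is $\pm1$), a point the paper leaves implicit. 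What the paper's detour buys, on the other hand, is the formula $\tr(\rho_S\theta)=2^n(-1)^n\Pf(\tilde S)$ of Equation (\ref{eq:62}), which is precisely the ingredient reused immediately afterwards in Lemma \ref{lem:1} and in the proof of Theorem \ref{thm:3}; your argument, though cleaner for the proposition itself, does not produce this by-product.
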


  \begin{proof}
    Item \ref{item:2}. $\theta$ is unitary: Due to $B(h)^* = B(\J h)$ and
    $\J e_a = e_a$ all the operators $B_a$ are selfadjoint, hence
    \begin{equation} \label{eq:35}
      \theta^* = 2^n (-i)^n B(e_{2n}) \cdots B(e_1)
    \end{equation}
    and this implies together with $\{B(e_a), B(e_a)\} = 2 B(e_a)^2 = \Bbb{1}$
    that $\theta \theta^* = \theta^*\theta = \Bbb{1}$. 

    To see that $\theta$ is also selfadjoint, note first that $B(e_a)$
    and $B(e_b)$ anticommute iff $a\neq b$. Now reverse inductively the
    ordering of the factors in Equation (\ref{eq:35}). The first two steps
    lead to 
    \begin{equation}
      \theta = -i^n 2^n B(e_1) B(e_{2d}) \cdots B(e_2) = -i^n 2^n B(e_1) B(e_2)
      B(e_{2n}) \cdots B(e_3) = 
      \dots \ .
    \end{equation}
    Hence we pick up an additional factor $-1$ at each second step (i.e. while
    moving a $B(e_a)$ with $a$ odd to the $a^{\mathrm{th}}$ position). In this
    way we get in total a factor $(-1)^n$, which shows with Equation
    (\ref{eq:35}) that $\theta$ is selfadjoint as stated.  

    Item \ref{item:4}. With a similar argument we can show that for all
    $a=1,\dots,2n$ we have 
    \begin{equation}
      \theta B(e_a) = (-1)^{2n-1} B(e_a) \theta= - B(e_a)
      \theta.  
    \end{equation}
    As a selfadjoint unitary $\theta$ satisfies $\theta^2 =
    \Bbb{1}$. This implies $\theta B(e_a) \theta = - B(e_a)$,
    which completes the proof of item \ref{item:4}.

    Item \ref{item:3}. If $R$ is real orthogonal, the basis $\tilde{e}_a =
    Re_a$ is again a real basis and the reasoning from above applies. Hence
    \begin{equation}
      \theta_R = i^n 2^n B(R e_1) \dots B(R e_{2n})
    \end{equation}
    is a selfadjoint unitary and implements the parity automorphism
    $\Theta$. Only the operators $\pm \theta$ have these properties, and
    therefore we only have to check the sign. To this end consider a quasifree
    state $\rho_S$ with covariance matrix $S$. To complete the proof we have
    to show that $\tr(\rho_S \theta_R) = \det(R) \tr(\rho_S \theta)$.

    According to the properties of the covariance operator $S$, we can
    introduce a real, antisymmetric matrix $\tilde{S}$ by
    \begin{equation}
      i \tilde{S}_{ab} = \langle e_a, (S - \Bbb{1}/2)) e_b \rangle.
    \end{equation}
    With the definition of quasifree states we therefore get
    \begin{align}
      \tr (\rho_S \theta) &= 2^n i^n \tr(\rho_S B(e_1) \cdots B(e_{2n}))  \\
      & = 2^n i^n \sum  \sign( p ) \prod_{j=1}^{n} \langle e_{p(2j-1)} , S e_{p(2j)}
      \rangle, \\
      &= 2^n i^n \sign( p ) \prod_{j=1}^{n} i \tilde{S}_{p(2j-1),p(2j)} \\
      &= 2^n (-1)^n \Pf(\tilde{S})
    \end{align}
    and similarly
    \begin{equation}
      \tr(\rho_S \theta_R) = 2^n (-1)^n \Pf(R^* \tilde{S} R)
    \end{equation}
    where we have identified in abuse of notation $R$ with its matrix
    representation in the basis $e_a$. The statement now follows from the
    properties of Pfaffian (i.e. $\Pf(R^* \tilde{S} R) = \det(R)
    \Pf(\tilde{S})$). 
  \end{proof}

  Note that we have shown in addition that the expectation value of $\theta$
  in a quasifree state $\rho_S$, is given by 
  \begin{equation} \label{eq:62}
    \tr(\rho_S \theta) = 2^n (-1)^n \Pf(\tilde{S}).
  \end{equation}
  This is very useful in the proof of Theorem \ref{thm:3}. The only problem is
  that we have decided about the sign of $\theta$ in an arbitrary way. This
  can be fixed with the following Lemma, which uses the notations introduced
  in Section \ref{sec:bipart-syst-entangl}. 

  \begin{lem} \label{lem:1}
    Consider a bipartite, Fermionic system consisting of $2m$ modes, and a
    quasifree, maximally entangled state $\psi_E$ with covariance matrix $E
    \in \mathcal{B}(\mathcal{K}^{(2m)})$. 
    \begin{enumerate}
    \item 
      There is a  real basis $e_a \in \mathcal{K}^{(2m)}$, $a=1,\dots,4m$ such
      that
      \begin{equation}
        e_1,\dots,e_{2m} \in \mathcal{K}_A^{(m)},\quad e_{2m+1},\dots,e_{4m}
        \in \mathcal{K}_B^{(m)}
      \end{equation}
      and
      \begin{equation} \label{eq:65}
        Y_E e_{2m+j} = e_{2m} \quad \forall j=1,\dots,2m
      \end{equation}
      holds, where $Y_E = -i I_A E I_B$ as in Equation (\ref{eq:55}). 
    \end{enumerate}
  \item 
    If $\theta$ denotes the parity operator given by Equation (\ref{eq:64})
    and this basis, we get
    \begin{equation}
      \langle \psi_E, \theta \psi_E\rangle = (-1)^m.
    \end{equation}
  \end{lem}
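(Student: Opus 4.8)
The plan is to handle the two parts in turn, using that for a maximally entangled quasifree state $Y_E$ is a real orthogonal isomorphism from Bob's reference space onto Alice's. For Part 1, I would first pick an arbitrary real orthonormal basis $e_{2m+1},\dots,e_{4m}$ of the real subspace $I_B\mathcal{K}^{(2m)}_{\Bbb{R}}$ and then \emph{define} $e_j := Y_E e_{2m+j}$ for $j=1,\dots,2m$. By the discussion following Equation (\ref{eq:55}), $Y_E$ is a real operator, hence commutes with $J$ and sends real vectors to real vectors; and the maximally entangled conditions $Y_EY_E^*=I_A$, $Y_E^*Y_E=I_B$ make it an isometry onto $I_A\mathcal{K}^{(2m)}$. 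Consequently the $e_j$ form a real orthonormal basis of $\mathcal{K}_A^{(m)}$, so $e_1,\dots,e_{4m}$ is a real basis adapted to the Alice/Bob split, and by construction $Y_E e_{2m+j}=e_j$ for all $j$, which is precisely (\ref{eq:65}). The point of this particular choice is that the orthogonal matrix $Y_{jk}=\langle e_j, Y_E e_{2m+k}\rangle$ then equals the identity, which is exactly what will trivialize the Pfaffian in Part 2.

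For Part 2, I would insert $\rho_E=|\psi_E\rangle\langle\psi_E|$ (quasifree with covariance $E$) and this adapted basis into Equation (\ref{eq:62}), whose $\theta$ is by construction the operator built from the \emph{same} basis via (\ref{eq:50})/(\ref{eq:64}). Since the total mode number here is $n=2m$, this gives
\begin{equation}
\langle\psi_E,\theta\psi_E\rangle = \tr(\rho_E\theta) = 2^{2m}(-1)^{2m}\Pf(\tilde E) = 2^{2m}\Pf(\tilde E),
\end{equation}
with $i\tilde E_{ab}=\langle e_a,(E-\Bbb{1}/2)e_b\rangle$. It then remains to identify $\tilde E$. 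The maximally entangled conditions $X_E=Z_E=0$ force $E_{AA}=\Bbb{1}_A/2$ and $E_{BB}=\Bbb{1}_B/2$, so the Alice/Alice and Bob/Bob blocks of $\tilde E$ vanish and only the off-diagonal block survives; using $Y_E=-2iE_{AB}$ together with $Y_{jk}=\delta_{jk}$ from Part 1 I obtain
\begin{equation}
\tilde E = \frac{1}{2}\begin{pmatrix} 0 & \Bbb{1}_{2m} \\ -\Bbb{1}_{2m} & 0 \end{pmatrix}.
\end{equation}

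Finally I would evaluate the Pfaffian of this scaled standard symplectic matrix. Using $\Pf(cA)=c^N\Pf(A)$ for a $2N\times 2N$ antisymmetric $A$ (here $N=2m$) and the standard identity $\Pf\bigl(\begin{smallmatrix} 0 & \Bbb{1}_N \\ -\Bbb{1}_N & 0\end{smallmatrix}\bigr)=(-1)^{N(N-1)/2}$, together with the fact that $2m-1$ is odd so that $2m(2m-1)/2\equiv m \pmod 2$, I get $\Pf(\tilde E)=2^{-2m}(-1)^m$, and hence $\langle\psi_E,\theta\psi_E\rangle=2^{2m}\cdot 2^{-2m}(-1)^m=(-1)^m$, as claimed. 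The conceptual content here is small; the part I expect to demand the most care is the bookkeeping of signs and normalizations — in particular making sure the $\theta$ in (\ref{eq:62}) is fixed by this very basis (so that no orientation-dependent sign flip sneaks in), that the factor of $2$ in $Y_E=-2iE_{AB}$ is tracked consistently (it is exactly what turns the $\pm1$ entries of the symplectic form into $\pm\tfrac12$ and cancels the $2^{2m}$ prefactor), and that the dependence of the Pfaffian on the ordering of the basis is handled correctly, since any slip in these places would corrupt the clean final sign $(-1)^m$.
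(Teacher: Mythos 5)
Your proof is correct and follows essentially the same route as the paper: you build the adapted basis by pushing an arbitrary real basis of Bob's space through the partial isometry $Y_E$ (so that $Y_{jk}=\delta_{jk}$), insert the resulting covariance matrix into Equation (\ref{eq:62}), and reduce everything to $\Pf\bigl(\tfrac{1}{2}\bigl(\begin{smallmatrix}0 & \Bbb{1} \\ -\Bbb{1} & 0\end{smallmatrix}\bigr)\bigr)=(-1)^m 2^{-2m}$, exactly as in the paper's Appendix B. Your only deviations are cosmetic improvements: you spell out the Pfaffian sign identity $(-1)^{N(N-1)/2}$ with $N=2m$ that the paper merely asserts, and you silently correct the typo $Y_E e_{2m+j}=e_{2m}$ in the lemma's statement to $Y_E e_{2m+j}=e_j$, which is the reading the paper's own proof uses.
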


  \begin{proof}
    Since $\psi_E$ is maximally entangled by assumption, the off-diagonal
    operator $Y$ is a partial isometry. Hence we can construct the basis $e_a$
    by choosing $e_{2m+1},\dots,e_{4m}$ as an arbitrary real basis of
    $\mathcal{K}_B^{(m)}$ and defining $e_1,\dots,e_{2m}$ by (\ref{eq:65}). In
    this basis $E$ becomes
    \begin{equation}
      E = \frac{1}{2} \left(
      \begin{array}{cc}
        \Bbb{1} & i\Bbb{1} \\
        -i \Bbb{1} & \Bbb{1}.
      \end{array}\right)
    \end{equation}
    Inserting this into (\ref{eq:62}) we get
    \begin{equation} \label{eq:66}
      \langle\psi_E, \theta \psi_E\rangle = 2^{2m} \Pf(\tilde{E})
    \end{equation}
    with 
    \begin{equation} 
      \tilde{E} = \frac{1}{2} \left(
      \begin{array}{cc}
        0 & \Bbb{1} \\
        -\Bbb{1} & 0
      \end{array} \right).
    \end{equation}
    The Pfaffian of $\tilde{E}$ is easy to calculate and gives $(-1)^{m(2m-1)} 2^{-2m}
    = (-1)^m 2^{-2m}$. Inserting this in (\ref{eq:66}) we get the result.
  \end{proof}
  
  \noindent\textit{Proof of Theorem \ref{thm:3}.}  
  Since $P_+ = (\Bbb{1} + \theta)/2$ holds, the probability $p = \tr(\rho_S
  P_+)$ can be calculated as $p = (1 + \tr(\rho_S \theta))/2$ where the sign of
  the parity is chosen by $\langle \psi_E, \theta \psi_E\rangle = 1$. Hence the
  statement follows from Equation (\ref{eq:62}) and Lemma \ref{lem:1}. \hfill
  $\Box$. 

  %%%%%%%%%%%%%%%%%%%%%%%%%%%%%%%%%%%%%%%%%%%%%%%%%%%%%%%%%%%%%%%%%%%%%%%%%%%%%%%%
  \section{The optimality proof}
  \label{sec:optimality-proof}
  %%%%%%%%%%%%%%%%%%%%%%%%%%%%%%%%%%%%%%%%%%%%%%%%%%%%%%%%%%%%%%%%%%%%%%%%%%%%%%%%

  The purpose of this appendix is to provide the proof of the optimality Theorem
  \ref{thm:5}. The crucial assumption is that the diagonal subblocks of the
  covariance operator $S$ satisfy $X = 0$ or $Z = 0$. In both cases the quantity
  $pf$ we have to optimize becomes (this follows directly from Equation
  (\ref{eq:61}), and properties of the Pfaffian and the determinant):
  \begin{equation} \label{eq:67}
    pf(S,m,D,V) = \frac{1}{2^{2m}} \left( \left| \det(D_A Y D_B + V) \right| +
      \left| \det(D_A Y D_B - V) \right| \right).
  \end{equation}
  Here we have identified in abuse of notation the spaces $D_A
  \mathcal{K}^{(2m)}$ and $D_B \mathcal{K}^{(2m)}$ and interpreted the operators 
  $D_A Y D_B \pm V$ as operators on $D_A \mathcal{K}^{(2m)}$, such that the
  determinants have a chance to lead to a non-zero value. The first step in the
  proof is to optimize with keeping $D$ fixed. 

  \begin{prop} \label{prop:1}
    Consider a bipartite Fermionic system consisting of $2m$ modes and a
    quasifree state $\rho_S$ with covariance matrix $S$ satisfying $X_S=0$ or
    $Z_S=0$. Assume in addition that the off-diagonal block has maximal rank $2m$,
    then we have 
    \begin{equation}
      pf(S,m,\Bbb{1},V) \leq pf(S,m,\Bbb{1},\hat{V}) = \prod_{k=1}^{2m}
      \frac{1+\lambda_k}{2} + \prod_{k=1}^{2m}\frac{1-\lambda_k}{2} 
    \end{equation}
    where $\lambda_k$ denotes the eigenvalues of $|Y|$ and $\hat{V}$ is the
    partial isometry defined in Equation (\ref{eq:68}).
  \end{prop}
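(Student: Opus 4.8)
The plan is to turn the optimization over partial isometries into an elementary combinatorial inequality for orthogonal matrices. With $D=\Bbb{1}$ the assumption $X_S=0$ or $Z_S=0$ reduces $pf$ via Equation (\ref{eq:67}) to
\[
pf(S,m,\Bbb{1},V) = \frac{1}{2^{2m}}\bigl(|\det(Y+V)| + |\det(Y-V)|\bigr),
\]
where, after identifying $I_A\mathcal{K}^{(2m)}$ with $I_B\mathcal{K}^{(2m)}$, the partial isometry $V$ becomes a real orthogonal matrix on a $2m$-dimensional space and $Y=Y_S$ is fixed and real. First I would insert the singular value decomposition $Y=U_1\Sigma U_2^T$ with $\Sigma=\mathrm{diag}(\lambda_1,\dots,\lambda_{2m})$ and $U_1,U_2\in\mathrm{O}(2m)$. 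Then $|\det(Y\pm V)| = |\det(\Sigma\pm W)|$ with $W=U_1^T V U_2$ again orthogonal, and the polar-decomposition choice $\hat V = V_S$ corresponds precisely to $W=\Bbb{1}$. Thus the whole claim reduces to showing, for all $W\in\mathrm{O}(2m)$,
\[
|\det(\Sigma+W)| + |\det(\Sigma-W)| \le \prod_{k=1}^{2m}(1+\lambda_k) + \prod_{k=1}^{2m}(1-\lambda_k),
\]
with equality at $W=\Bbb{1}$; the latter is immediate (and yields the stated closed form, using $0\le\lambda_k\le 1$, which follows from $0\le S\le\Bbb{1}$).

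The key device is the expansion of a determinant along a diagonal summand. Writing $\beta_T=\prod_{j\notin T}\lambda_j\ge 0$ for $T\subseteq\{1,\dots,2m\}$ and letting $W[T]$ denote the principal submatrix of $W$ on $T$, one has
\[
\det(\Sigma\pm W) = \sum_T (\pm 1)^{|T|}\,\beta_T\,\det(W[T]).
\]
I would split this into even and odd $|T|$, setting $P=\sum_{|T|\ \mathrm{even}}\beta_T\det(W[T])$ and $Q=\sum_{|T|\ \mathrm{odd}}\beta_T\det(W[T])$, so that $\det(\Sigma+W)=P+Q$ and $\det(\Sigma-W)=P-Q$. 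The crucial simplification is then the elementary real identity
\[
|\det(\Sigma+W)| + |\det(\Sigma-W)| = |P+Q| + |P-Q| = 2\max(|P|,|Q|).
\]

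It remains to bound $\max(|P|,|Q|)$. Here I would use that every principal minor of an orthogonal matrix satisfies $|\det(W[T])|\le 1$ (a submatrix of a norm-one operator has norm at most one, so the modulus of its determinant, a product of singular values, is bounded by $1$). Since $\beta_T\ge 0$ this gives $|P|\le\sum_{|T|\ \mathrm{even}}\beta_T$ and $|Q|\le\sum_{|T|\ \mathrm{odd}}\beta_T$, and since $\sum_{|T|\ \mathrm{even}}\beta_T - \sum_{|T|\ \mathrm{odd}}\beta_T = \prod_k(1-\lambda_k)\ge 0$ the odd sum is dominated by the even one. Hence both $|P|$ and $|Q|$ are at most $\sum_{|T|\ \mathrm{even}}\beta_T = \tfrac12(\prod_k(1+\lambda_k)+\prod_k(1-\lambda_k))$, which is exactly the asserted bound, attained at $W=\Bbb{1}$. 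The main obstacle is that the naive triangle inequality applied to each determinant separately is far too lossy (it only yields $2\prod_k(1+\lambda_k)$); the essential trick is the even/odd splitting, which converts the sum of moduli into $2\max(|P|,|Q|)$ and lets the symmetric-function identity $\prod_k(1-\lambda_k)\ge 0$ supply the sharp constant.
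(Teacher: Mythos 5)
Your proof is correct, and its first half coincides with the paper's: both reduce via Equation (\ref{eq:67}) to bounding $|\det(Y+V)|+|\det(Y-V)|$ over real orthogonal $V$, pass to a frame in which $Y$ is diagonal with entries $\lambda_k$ (you via the real SVD $Y=U_1\Sigma U_2^T$, the paper by choosing a real basis diagonalizing $Y$ --- the two normalizations are equivalent, and your observation that $\hat V=V_S$ corresponds to $W=\Bbb{1}$ is the right way to see why the polar isometry is the candidate optimizer), expand the determinants as polynomials in the $\lambda_j$ whose coefficients are principal minors of the orthogonal matrix, and invoke $|\det W[T]|\leq 1$ for submatrices of an orthogonal matrix. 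You diverge in the endgame. The paper splits into cases according to whether the two determinants carry equal or opposite signs --- which is exactly your identity $|P+Q|+|P-Q|=2\max(|P|,|Q|)$ in disguise --- then argues that term-by-term maximality forces $V$ to be a diagonal sign matrix, and finally runs a separate combinatorial optimization over the sign partitions $\Gamma_\pm$ (Equations (\ref{eq:70})--(\ref{eq:71})) to show the all-equal-signs choice wins. You shortcut both stages: bounding $|P|$ and $|Q|$ each by $\sum_{|T|\ \text{even}}\beta_T$, using the one-line identity $\sum_{|T|\ \text{even}}\beta_T-\sum_{|T|\ \text{odd}}\beta_T=\prod_k(1-\lambda_k)\geq 0$ (valid because the dimension $2m$ is even), yields the sharp constant at once for \emph{every} orthogonal $W$, with equality visibly attained at $W=\Bbb{1}$. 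This is not only shorter; it is logically tighter, since the paper's intermediate claim that achieving the maximum requires all submatrices $V_\Sigma$ to be orthogonal, hence $V$ diagonal, is a somewhat delicate step your argument never needs --- you prove a uniform upper bound and exhibit attainment, which is precisely what the proposition asserts.
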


  \begin{proof}
    Let us choose a real basis $e_a$ in $\mathcal{K}^{(2m)}$ such that $Y$ is
    diagonalized, i.e. $\langle e_j, Y e_{k+2m}\rangle = \delta_{jk}
    \lambda_k$. The matrix elements $V_{jk} = \langle e_j, V e_{k+2m}\rangle$ of
    $V$ in the same basis form a orthogonal matrix. Hence the determinants in
    Equation (\ref{eq:67}) become
    \begin{equation}
      \det(Y \pm V) = \sum_p \epsilon_p \prod_{j=1}^{2m} (V_{jp(j)} \pm
      \delta_{jp(j)} \lambda_j),
    \end{equation}
    where the sum is taken over all permutations of $1,\dots,2m$. This can be
    rewritten as a polynomial in the $\lambda_j$: 
    \begin{equation}
      \det(Y + V) = \sum_{\Sigma \subset \{1,\dots,2m\}} \prod_{j \in \Sigma}
      \lambda_j \det(V_\Sigma). 
    \end{equation}
    $V_\Sigma$ denotes the matrix we get from $V$ if we remove the
    $j^{\mathrm{th}}$ row and column for all $j \in \Sigma$. Now assume that
    both determinants have the same sign. Then Equation (\ref{eq:67}) leads to
    \begin{equation} \label{eq:69}
      pf(S,m,\Bbb{1},V) = 2 \sum_{\substack{\Sigma \subset \{1,\dots,2m\}\\
          \Sigma\ \text{even}}} \prod_{j \in \Sigma} \lambda_j \det(V_\Sigma).
    \end{equation}
    Now note that $0 < \lambda_j \leq 1$ for all $j$ and $V_\Sigma$ is a
    submatrix of an orthogonal matrix. Hence $|\det V_\Sigma| \leq 1$ with
    equality iff $V_\Sigma$ is orthogonal again. The expression in (\ref{eq:69})
    is maximal iff $\det V_\Sigma = 1$ for all $\Sigma$. But this is implies
    that all submatrix $V_\Sigma$ are orthogonal. This is only possible if $V$ is
    diagonal. A similar reasoning holds if both determinants in (\ref{eq:67})
    have opposite signs (instead of (\ref{eq:69}) we get a sum over subsets
    with an odd cardinality). 

    This shows that $Y$ has to be diagonal with eigenvalues $\pm 1$. The optimal
    value of $pf(S,m,\Bbb{1},V)$ then becomes
    \begin{equation}
      \mathrm{fid}(\Gamma_+,\Gamma_-) + \mathrm{fid}(\Gamma_-,\Gamma_+)
    \end{equation}
    with
    \begin{equation}
      \mathrm{fid}(\Gamma_+,\Gamma_-) = \left| \prod_{j \in \Gamma_+} (1 +
        \lambda_j) \prod_{k \in  \Gamma_-} (-1 + \lambda_k) \right| 
      = \prod_{j \in \Gamma_+} (1 +\lambda_j) \prod_{k \in \Gamma_-}
      (1-\lambda_k) \label{eq:70} 
    \end{equation}
    where $\Gamma_+$ and $\Gamma_-$ are disjoint sets of integers satisfying
    $\Gamma_+ \cup \Gamma_- = \{1,\dots,2m\}$. Also note that the second
    equation in (\ref{eq:70}) holds, due to $0 < \lambda_j \leq 1$. The
    proposition is proved if we can show that
    \begin{equation}
      \mathrm{fid}(\{1,\dots,2m\}, \emptyset) +
      \mathrm{fid}(\emptyset,\{1,\dots,2m\}) \geq
      \mathrm{fid}(\Gamma_+,\Gamma_-) + \mathrm{fid}(\Gamma_-,\Gamma_+)
    \end{equation}
    holds. To prove this inequality rewrite $\mathrm{fid}$ as
    \begin{align}
      \mathrm{fid}(\Gamma_+,\Gamma_-) &= \mathrm{fid}(\Gamma_+, \emptyset)
      \sum_{\gamma \subset \Gamma_-} (-1)^{|\gamma|} \prod_{j\in\gamma}
      \lambda_j \\
      &= \mathrm{fid}(\emptyset,\Gamma_-) \sum_{\gamma \subset \Gamma_+}
      \prod_{j\in\gamma} \lambda_j.  
    \end{align}
    This leads to 
    \begin{equation} \label{eq:71}
      \mathrm{fid}(\Gamma_+,\Gamma_-) + \mathrm{fid}(\Gamma_-,\Gamma_+) =
      \sum_{\gamma \subset \Gamma_-} \bigl((-1)^{|\gamma|}
      \mathrm{fid}(\Gamma_+,\emptyset) +
      \mathrm{fid}(\emptyset,\Gamma_+)\bigr)\prod_{j\in\gamma} \lambda_j. 
    \end{equation}
    The definition of $\mathrm{fid}$ shows that
    \begin{equation}
      \mathrm{fid}(\Gamma_+,\emptyset) \geq \mathrm{fid}(\emptyset,\Gamma_+)
    \end{equation}
    holds. Hence each summand in (\ref{eq:71}) belonging to $\gamma$ with odd
    cardinality is negative. In contrast to that we get
    \begin{equation}
      \mathrm{fid}(\{1,\dots,2m\}, \emptyset) +
      \mathrm{fid}(\emptyset,\{1,\dots,2m\}) = \sum_{\gamma \subset \Gamma_-}
      \bigl(\mathrm{fid}(\Gamma_+,\emptyset) + (-1)^{|\gamma|}
      \mathrm{fid}(\emptyset,\Gamma_+)\bigr) \prod_{j\in\gamma} \lambda_j
    \end{equation}
    where all summands are positive. This completes the proof.
  \end{proof}
  
  \noindent\textit{Proof of Theorem \ref{thm:5}.} We are now ready to complete
  the proof of Theorem \ref{thm:5}. Hence let us consider $2L$ modes in the
  quasifree $\rho_S$, an integer $m < L$ and a real projection $D = D_A \oplus
  D_B \in \mathcal{K}^{(2m)}$. If $V_{DSD}$ denotes the partial isometry given
  by the polar decomposition of $D_AYD_B$, we get with Proposition \ref{prop:1}:
  \begin{equation}
    pf(S,m,D,V_{DSD}) = \frac{1}{2^{2m}} \prod_{k=1}^{2m} \frac{1+\tilde{\lambda}_k}{2} +
    \prod_{k=1}^{2m}\frac{1-\tilde{\lambda}_k}{2}
    = \frac{1}{2^{2m}} \sum_{\substack{\Sigma \subset \{1,\dots,2m\}\\ |\Sigma|\ \text{even}}}
    \prod_{j \in \Sigma} \tilde{\lambda}_j
  \end{equation}
  where $\tilde{\lambda}_j$ are the eigenvalues of $|Y_{DSD}|$ given in
  decreasing order. Obviously this quantity is optimized, if the
  $\tilde{\lambda}_j$ are as big as possible. Hence the theorem is proved if we
  can show that each $D$ satisfies
  \begin{equation}
    \lambda_j \geq \tilde{\lambda}_j\quad \forall j=1,\dots,2m
  \end{equation}
  with equality only if $D = \hat{D}$. Here $\lambda_1, \dots, \lambda_{2L}$ are
  the eigenvalues of $|Y|$. 

   To get this bound note first that instead of $|Y|$ and $|Y_{DSD}|$ we can
   look at the eigenvalues of $Y^*Y$ and $Y_{DSD}^*Y_{DSD}$. To get the
   corresponding bound we will use the  Courant-Fischer theorem (Theorem 4.2.11
   of \cite{MR832183}) which states that the $k^{\mathrm{th}}$ highest
   eigenvalue $\lambda_k$ of a hermitian matrix $A$ is given by
  \begin{equation}
    \lambda_k = \sup_F \inf_{\substack{x \neq 0\\ Fx=0}} \frac{\langle x, A
        x\rangle}{\langle x, x \rangle}  
  \end{equation}
  where the supremum is taken over all rank $k-1$ projections $F$. Now we get
  \begin{align}
    \frac{\langle x, Y^*Y x\rangle}{\langle x, x \rangle} &\geq \frac{\langle D_Bx, Y^*Y
      D_Bx\rangle}{\langle x, x \rangle}  \\
    &\geq \frac{\langle x, D_B Y^*D_A Y D_B x\rangle}{\langle x, x \rangle} + \frac{\langle x, D_B
      Y^*(\Bbb{1} - D_A) Y D_B x\rangle}{\langle x, x \rangle} \\
    &\geq \frac{\langle x, D_B Y^*D_A Y D_B x\rangle}{\langle x, x \rangle},
  \end{align}
  where we have used the fact that $\Bbb{1}-D_A$ is a positive operator. This leads to
  \begin{equation}
    \sup_F \inf_{\substack{x \neq 0\\ Fx=0}} \frac{\langle x, Y^*Y x\rangle}{\langle x, x \rangle}
    \geq \sup_F \inf_{\substack{x \neq 0\\ Fx=0}} \frac{\langle x, D_B Y^* D_A Y D_B
      x\rangle}{\langle x, x \rangle}.  
  \end{equation}
  Hence the $k^{\mathrm{th}}$ highest eigenvalue of $Y^*Y$ is greater or equal to the
  $k^{\mathrm{th}}$ highest eigenvalue of $D_B Y^* D_A Y D_B$. Since this bound
  is attained by $\hat{D} = \hat{D}_A \oplus \hat{D}_B$  the theorem is
  proved.\hfill $\Box$

  \medskip
  \noindent\textit{Proof of Corollary \ref{kor:1}.} We show that increasing $m$
  by two decreases $pf$, i.e.
  \begin{equation}
    pf(S,m,\hat{D},\hat{V}) \geq pf(S,m+2,\hat{D},\hat{V}).
  \end{equation}
  To this end let us introduce first the polynomial
  \begin{equation} 
    g = \frac{1}{2^{2m}} \sum_{\substack{\Sigma \subset \{1,\dots,2m\}\\
        |\Sigma|\ \text{odd}}}  \prod_{j \in \Sigma} \lambda_j =
    \prod_{j=1}^{2m} \frac{1+\lambda_j}{2} -  \prod_{j=1}^{2m}
    \frac{1-\lambda_j}{2}  \leq pf(S,m,\hat{D},\hat{V}) \label{eq:73}
  \end{equation}
  Then we can express $pf(S,m+2,\hat{D},\hat{V})$ as
  \begin{equation}
    pf(S,m+2,\hat{D},\hat{V}) = \frac{1}{4} ( pf(S,m,\hat{D},\hat{V}) +
    pf(S,m,\hat{D},\hat{V}) \lambda_{m+1}\lambda_{m+2} + g \lambda_{m+1} 
    +g \lambda_{m+2}).
  \end{equation}
  The statement follows from Equation (\ref{eq:73}) and the fact that
  $\lambda_{m+1} \leq 1$ and $\lambda_{m+2} \leq 1$.\hfill $\Box$

\end{appendix}

\bibliographystyle{mk} \bibliography{qinf}
\end{document}